\newtheorem{theorem}{Theorem}[section]
\newtheorem{prop}[theorem]{Proposition}
\newtheorem{lemma}[theorem]{Lemma}
\newtheorem{define}[theorem]{Definition}
\newtheorem{remark}[theorem]{Remark}
\def\no{\nonumber}
\def\F{{\mathcal{F}}}
\def\Q{{\mathbb{Q}}}
\def\dtrdeg{\hbox{$\sigma$\rm{.tr.deg}}}
\def\Y{{\mathbb{Y}}}
\def\bu{{\mathbf{u}}}
\def\P{{\mathbb P}}
\def\norm{\hbox{\rm{N}}}
\newcommand{\Inp}[1]
  {\noindent\begin{tabular}{@{}p{1.8cm}@{}p{13.2cm}@{}}
   {\bf Input: }&#1 \end{tabular}}
\newcommand{\Outp}[1]
  {\noindent\begin{tabular}{@{}p{1.8cm}@{}p{13.2cm}@{}}
   {\bf Output: }&#1 \end{tabular}}
\DeclareFontFamily{U}{fsy}{} \DeclareFontShape{U}{fsy}{m}{n}{<->s*[.
9]psyr}{} \DeclareSymbolFont{der@m}{U}{fsy}{m}{n}
\DeclareMathSymbol{\diff}{\mathord}{der@m}{182}
\newcommand{\SPC}{\hspace*{15pt}}
\def\X{{\mathbb{X}}}
\def\Y{{\mathbb{Y}}}
\def\C{{\mathcal C}}
\def\P{{\mathbb{P}}}
\def\Q{{\mathbb{Q}}}
\def\BF{{\mathbb{F}}}
\def\bu{{\mathbf{u}}}
\def\bc{{\mathbf{c}}}
\def\max{\hbox{\rm{max}}}
\def\Res{\hbox{\rm Res}}
\def\deg{\hbox{\rm{deg}}}
\def\ord{\hbox{\rm{ord}}}
\def\ord{\hbox{\rm{ord}}}
\def\rk{\hbox{\rm{rk}}}
\def\Jac{\hbox{\rm{Jac}}}
\def\I{\mathcal{I}}
\def\SR{{\mathbf{R}}}
\def\Q{{\mathbb Q}}
\def\F{{\mathcal{F}}}
\def\TT{{\mathbb{T}}}
\def\JJ{{\mathbb{J}}}
\def\dtrdeg{\hbox{\rm{d.tr.deg}}}
\def\and{\cap}
\newcounter{bean}
\def\bl{\begin{list}{Step \arabic{bean}}{\usecounter{bean}}\labelwidth=34pt}
\def\el{\end{list}}
\def\deg{{\rm deg}}
\def\normalization1{{\rm normalization1}}
\def\normalization{{\rm normalization}}
\def\irrfactor1{{\rm irrfactor1}}
\def\irrfactor{{\rm irrfactor}}
\def\card{\hbox{\rm{card}}}
\def\rank{{\rm rank}}
\def\X{{\mathbb{X}}}
\def\Y{{\mathbb{Y}}}
\def\C{{\mathcal C}}
\def\SC{{\mathcal{S}}}
\def\P{{\mathbb P}}
\def\Q{{\mathbb Q}}
\def\bu{{\mathbf{u}}}
\def\bc{{\mathbf{c}}}
\def\max{\hbox{\rm{max}}}
\def\deg{\hbox{\rm{deg}}}
\def\ord{\hbox{\rm{ord}}}
\def\ord{\hbox{\rm{ord}}}
\def\rk{\hbox{\rm{rk}}}
\def\norm{\hbox{\rm{N}}}
\def\I{\mathcal{I}}
\def\BF{{\mathbb{F}}}
\def\Q{{\mathbb Q}}
\def\F{{\mathcal {F}}}
\def\dtrdeg{\hbox{$\sigma$\rm{.tr.deg}}}
\def\Jac{\hbox{\rm{Jac}}}
\def\and{\cap}
\def\bl{\begin{list}{Step \arabic{bean}}{\usecounter{bean}}\labelwidth=34pt}
\def\el{\end{list}}
\def\deg{{\rm deg}}
\def\normalization1{{\rm normalization1}}
\def\normalization{{\rm normalization}}
\def\irrfactor1{{\rm irrfactor1}}
\def\irrfactor{{\rm irrfactor}}
\def\Res{\hbox{\rm Res}}
\def\rank{{\rm rk}}
\def\SR{{\mathbf{R}}}
\def\TT{{\mathbb{T}}}
\def\JJ{{\mathbb{J}}}
\def\card{\rm card}
\def\c{\mathrm{c}}
\begin{document}

\begin{frontmatter}

\title{New bounds and an efficient algorithm for sparse difference resultants\footnote{This paper is partially supported by the National Natural Science Foundation of China (Nos.11688101 and 11671014) and Beijing Natural Science Foundation (No. Z190004).}}

\author{Chun-Ming Yuan}
\address{KLMM, Academy of Mathematics and Systems Sciences, CAS, 100190, Beijing, China}
 \address{School of Mathematical Sciences, University of  Chinese Academy of Sciences,100049, Beijing, China}
\ead{cmyuan@mmrc.iss.ac.cn}

\author{Zhi-Yong Zhang* }
\address{College of Science, Minzu University of China,
Beijing 100081, China}
\ead{zzy@muc.edu.cn (Corresponding author)}

\begin{abstract}
 The sparse difference resultant introduced in \citep{gao-2015} is a basic concept in difference elimination theory.
 In this paper,
we show that the sparse difference resultant of a generic Laurent transformally essential system
can be computed via the sparse resultant of a simple algebraic system arising from the difference system.
Moreover, new order bounds of sparse difference resultant are found.
Then we propose an efficient algorithm to compute sparse difference resultant which is the quotient of two determinants
whose elements are the coefficients of the polynomials in the algebraic system.
 The complexity of the algorithm is analyzed and experimental results show the efficiency of the algorithm.
\end{abstract}

\begin{keyword}
Sparse difference resultant,  Laurent transformally essential system, Sparse resultant, Complexity
\end{keyword}
\end{frontmatter}
\section{Introduction}\label{SECT:intro}
It is well-known that the resultant, as a basic concept in algebraic geometry and a powerful tool in elimination theory, gives conditions for an over-determined system of polynomial equations to have common solutions \citep{cox-2004}.  Since most polynomials are sparse as they only contain certain fixed monomials, Gelfand, Kapranov, Sturmfels, and Zelevinsky introduced the concept of sparse resultant \citep{gelfand,sturmfels}. 
Canny and Emiris showed that the sparse resultant is a factor of the determinant of a Macaulay style matrix and gave efficient
algorithms to compute the sparse resultant based on this matrix representation \citep{emiris,emiris-2012a,emiris-2012b}.
D'Andrea further showed that the sparse resultant is the quotient of two determinants where the denominator is a minor of the
numerator \citep{an-2011}.

With the resultant and sparse resultant theories becoming more mature, it is a natural idea to extend the algebraic results to differential and
difference cases due to their broad applications. However, such results in differential and difference cases are not complete parallel with
algebraic case. For the ordinary differential case, differential resultants and sparse differential resultants are studied successively \citep{gao-2015-a,rueda-2010,yang-2011}. 
For the ordinary difference case, Li et al. introduced the concept of sparse difference resultant for a Laurent
transformally essential system consisting of $n+1$ Laurent difference polynomials in $n$ difference variables and its basic
properties are proved \citep{gao-2015}.
%
Based on the degree and the order bounds, they proposed a single exponential algorithm in terms of the number of variables, the Jacobi number, and the size of the Laurent transformally essential system, which is essentially to search for the sparse difference resultant with the given order and degree bound.

{\color{black} Sparse} difference resultant arises in many areas of pure and applied mathematics and has potential applications beside the symbolic computation community. For example, the vanishing of the sparse difference resultant gives a necessary condition for the corresponding difference polynomial system to have non-zero solutions \citep{gao-2015}.  {\color{black}Thus it has great potentials to solve some important problems in difference algebra}, such as elimination of difference indeterminates, solving
difference equations in difference fields and so on \citep{cohn,ovc-2018,hru-2007}. {Some natural phenomena in real world are described by difference equations \citep{gao-2009,ovc-2018,roeger-2004,hensen-2007,ekhad-2014}. } However, since the case of difference polynomial system is traditionally significantly harder, there are only few efficient algorithms and techniques about difference resultants so far. Up to now using mature techniques of the algebraic case to deal with differential or difference cases is an effective way \citep{yang-2011,gao-2015}. In the difference case one need to convert difference polynomials into an algebraic polynomial system, then to compute the classical sparse resultant, and finally convert the results back in the
world of difference polynomials. One of the key observations of the conversion is the effective order of sparse difference resultant. Thus the order bound of sparse difference resultant is particularly important and has a direct impact on the complexity of the associated algorithm.

In this paper, we further explore efficient algorithms to find the sparse difference resultant of a
given difference polynomial system {using the difference structure and difference specialization technique}.
We show that the sparse difference resultant of a Laurent transformally
essential system consisting of $n+1$ Laurent difference polynomials in $n$ difference variables
is the same as the one of a simple system consisting of $m+1$ polynomials in $m$ difference variables,
where $m$ is the rank of the symbolic support matrix of the super essential system.
%
Moreover, a new order bound of sparse difference resultant is given. Then we propose an efficient algorithm to
compute sparse difference resultant, {\color{black}which is based on the
fact that sparse difference resultant} is shown to be the
sparse resultant for a certain generic algebraic polynomial system.
It starts with the given sparse difference polynomial system and
directly obtain a strong essential polynomial system of the original
system, then one can regard it as sparse algebraic polynomial system
and use the algorithm in \citep{emiris} to construct the matrix
representation whose determinant is the required sparse difference
resultant.
{Furthermore, the computations of finding the strong essential polynomial system are compiled as the function \textbf{SDResultant} with
Mathematica and then the mixed subdivision algorithm is called to search for the sparse resultant.}

The rest of the paper is arranged as follows. In Section 2, we
review some preliminary results which contains definitions and theorems of sparse resultant and sparse difference resultant. Section 3 concentrates on the main results of the paper involving  the theoretical preparation of the algorithm, algorithm implementation and illustrated examples. The last section concludes the results.

\section{Preliminaries}\label{SECT:preliminaries}
\subsection{{\color{black}On the sparse resultant}}
We first introduce several basic notions and properties {\color{black} of sparse
resultant} which are needed in the algorithm. We refer to
\citep{gelfand,sturmfels,sturmfels2,gao-2015,emiris} for more details.

Let $\mathcal{B}_0,\ldots,\mathcal{B}_n$ be finite subsets of
$\mathbb{Z}^n$. Assume ${\bf 0}\in\mathcal{B}_i$ and
$|\mathcal{B}_i|\geq 2$ for each $i$, where here and below, the symbol $|S|$ denotes the cardinality of the set $S$. For algebraic indeterminates
$\X=\{x_1,\ldots,x_n\}$ and
$\alpha=(\alpha_1,\ldots,\alpha_n)\in\mathbb{Z}^n$, denote
$\X^{\alpha}=\prod_{i=1}^nx_i^{\alpha_i}$. Let
\begin{equation}\label{eq-algsparse}
\BF_i(x_1,\ldots,x_n)=c_{i0}+\sum_{\alpha\in\mathcal{B}_i\backslash
\{{\bf 0}\}}c_{i\alpha}\X^{\alpha}\,(i=0,\ldots,n)
\end{equation}
{\color{black}be a generic sparse Laurent polynomial system and $c_{i\alpha}$ are treated as parameters}. We call $\mathcal{B}_i$ the
support of $\BF_i$ and
$\omega_i=\sum_{\alpha\in\mathcal{B}_i}c_{i\alpha}\alpha$ is called
the {\em symbolic support vector} of $\BF_i$. The  convex hull of $\mathbb{R}^n$ containing $\mathcal{B}_i$ is called the
{\em Newton polytope} of $\BF_i$.
For any subset $I\subset\{0,\ldots,n\}$, the matrix $M_{I}$ whose
row vectors are $\omega_i\, (i\in I)$ is called the {\em symbolic
support matrix} of $\{\BF_i:i\in I\}$.
 Denote $\bc_i=(c_{i\alpha})_{\alpha\in\mathcal{B}_i}$,
$\bc_I=\cup_{i\in I}(\bc_i)$ and by $\rk(M_I)$ the rank of matrix $M_I$.

\begin{define}{\rm(\citep[Definition 61]{gao-2015})}
Follow the notations introduced above.
\begin{itemize}
\item  A collection of  $\{\BF_i\}_{i\in I}$   is
{\em weak essential} if $\rk(M_I)=|I|-1$.
\item A collection of $\{\BF_i\}_{i\in I}$  is
{\em essential} if $\rk(M_I)=|I|-1$ and for each proper subset
$\text{J}$ of $\text{I}$, $\rk(M_J)=|J|$.
\end{itemize}
\end{define}

{\color{black}Note that we sometimes call system (\ref{eq-algsparse}) algebraically essential instead of essential in order to differentiate it with difference case.}
A polynomial system $\{\BF_i\}_{i\in I}$ is weak essential  if and
only if $(\BF_i:i\in I)\cap \Q[\bc_I]$ is of codimension one {\color{black} in $\Q[\bc_I]$}~\citep{emiris}. In
this case, there exists an irreducible polynomial $\SR\in\Q[\bc_I]$
such that $(\BF_i:i\in I)\cap \Q[\bc_I] = (\SR)$ and $\SR$ is called
the {\em sparse resultant} of $\{\BF_i:i\in I\}$. Furthermore, the
system $\{\BF_i\}_{i\in\text{I}}$ is essential  if and only if
$(\BF_i:i\in I)\cap \Q[\bc_I]= (\SR)$ and $\bc_i$ appears
effectively in $\SR$ for each $i\in I$.

Suppose an arbitrary total ordering of $\{\BF_0,\ldots,\BF_n\}$ is
given, say $\BF_0 < \BF_1 < \cdots < \BF_n$. Now we define a total
ordering $\succ$ among subsets of $\{\BF_0,\ldots,\BF_n\}$. For any two
subsets $\mathcal {D} = \{ D_0,\ldots, D_s \}$ and $\C =
\{C_0,\ldots, C_t \}$ where $D_0 > \cdots > D_s$ and $C_0 >\cdots>
C_t$, $\mathcal{D}$ is said to be of {\em higher ranking }than
$\C$, denoted by $\mathcal{D}\succ\C$, if 1) there exists an $i\leq
\min(s,t)$ such that $D_0 = C_0, \ldots, D_{i-1} = C_{i-1}$, $D_i >
C_i$ or 2) $s>t$ and $D_i = C_i\,(i=0,\ldots,t)$. Note that if
$\mathcal {D}$ is a proper subset of $\C$, then $\C\succ \mathcal
{D}$.
Thus for the system $\BF=\{\BF_i:i=0,\ldots,n\}$ given in
(\ref{eq-algsparse}), if $\rk(M_\BF) \le n$, then $\BF$ has an
essential subset with minimal ranking.

{
\begin{lemma}{\rm(\citep[Lemma 65]{gao-2015})}\label{lm-vess}
Suppose $\BF_I=\{\BF_i:i\in I\}$ is an essential system. Then there
exists an $I'\subset \{ 1,\ldots,n \}$ with $ |I'| = n-|I|+1$, such that by setting $ x_i , i\in I'$ to $1$, the specialized system
$\widetilde{\BF_I} = \{\widetilde{\BF}_i:\,i\in I \}$ satisfies \\
$(1)$ \,\, $\widetilde{\BF_I}$ is still essential. \\
$(2)$ \,\, $\rk(M_{\widetilde{\BF_I}}) = |I|-1 $ is the number of variables in $\widetilde{\BF_I}$. \\
$(3)$ \,\, $(\BF_I)\cap \Q[\bc_I] = (\widetilde{\BF_I})\cap
\Q[\bc_I]$,
where $\widetilde{\BF_I} = \BF_i |_{x_i=1,i\in I'}  $.
\end{lemma}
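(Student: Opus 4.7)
The plan is to exploit the rank condition on the symbolic support matrix $M_I$ to select variables that may safely be collapsed. Write $|I|=m+1$, so essentiality gives $\rk(M_I)=m$.

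As a first step I would choose $J\subset\{1,\dots,n\}$ with $|J|=m$ consisting of indices of $m$ linearly-independent columns of $M_I$, which is possible because the column rank equals $m$, and set $I':=\{1,\dots,n\}\setminus J$. Then $|I'|=n-m=n-|I|+1$ as required. By construction the restricted $(m+1)\times m$ matrix $M_I|_J$ has rank $m$; since $\widetilde{\BF_I}$ has exactly $|J|=m=|I|-1$ variables, this already yields (2) together with the rank part of (1). For the essentiality part of (1), the one-dimensional left kernel of $M_I|_J$ is contained in, hence equal to, the one-dimensional left kernel of $M_I$. Essentiality of $\BF_I$ forces that kernel to be spanned by some $(\lambda_0,\dots,\lambda_m)$ all of whose entries are nonzero — otherwise deleting the row with a zero entry would give a rank-deficient proper subsystem. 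Hence for any proper subset $J'\subsetneq I$, the rows of $M_I|_J$ indexed by $J'$ are linearly independent (any dependence would be a kernel vector with a zero coordinate), so $\rk(M_{\widetilde{\BF_I},J'})=|J'|$, completing essentiality.

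The most delicate point is (3). The easy inclusion $(\BF_I)\cap\Q[\bc_I]\subseteq(\widetilde{\BF_I})\cap\Q[\bc_I]$ follows directly from the substitution $x_i\mapsto 1$, $i\in I'$: any representation $p=\sum h_i\BF_i$ of $p\in\Q[\bc_I]$ becomes $p=\sum\widetilde{h_i}\widetilde{\BF_i}$. For the reverse inclusion I would use that both ideals are principal, generated respectively by the irreducible sparse resultants $\SR$ and $\widetilde{\SR}$ coming from essentiality of the two systems. It then suffices to show $\SR\mid\widetilde{\SR}$, which translates geometrically into the claim that every coefficient specialisation giving $\BF_I$ a common torus solution also gives $\widetilde{\BF_I}$ one. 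Essentiality forces the lattice $L\subset\Z^n$ spanned by $\cup_i\mathcal{B}_i$ to have rank $m$, so $L^{\perp}$ has rank $n-m$ and yields an $(n-m)$-dimensional torus of symmetries acting on the solution set of $\BF_I$. The defining property of $J$ — columns of $M_I|_J$ linearly independent — is equivalent to the projection $L^{\perp}\to\Z^{I'}$ having full rank $n-m$, so over an algebraically closed field this symmetry torus acts transitively on the $I'$-coordinates; hence any common solution of $\BF_I$ can be translated to one with $x_i=1$ for every $i\in I'$, whose $J$-part solves $\widetilde{\BF_I}$.

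The main obstacle is this last step: converting the column-independence of $M_I|_J$ into transitivity of the lattice torus on the $I'$-coordinates, and then using it to lift common torus solutions of $\BF_I$ back to common solutions of $\widetilde{\BF_I}$. The rest is a straightforward linear-algebraic unpacking of essentiality combined with the choice of $J$.
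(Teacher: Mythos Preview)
The paper does not prove this lemma --- it is quoted from \cite{gao-2015} without argument --- so there is no in-paper proof to compare against. Your outline for (1) and (2) is sound: selecting $m$ independent columns and observing that the one-dimensional left kernel of $M_I$ coincides with that of $M_I|_J$, with all entries of a spanning kernel vector nonzero by essentiality, handles both conditions cleanly.

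For (3) you are working much harder than necessary, and the part you flag as the ``main obstacle'' is not actually needed. You already have the easy inclusion $(\BF_I)\cap\Q[\bc_I]\subseteq(\widetilde{\BF_I})\cap\Q[\bc_I]$ by substitution, and you already invoke that each side is principal with an \emph{irreducible} generator (this uses essentiality of $\widetilde{\BF_I}$, which you proved in (1)). But these two facts alone finish the argument: the easy inclusion gives $\widetilde{\SR}\mid\SR$, and with both irreducible this forces $\SR=c\,\widetilde{\SR}$, hence equality of ideals. The torus-symmetry argument is redundant.

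If you do want to run the geometric argument, two points need tightening. First, the claim ``essentiality forces $\rk L=m$'' is true but not immediate: it requires the full strength of essentiality (every proper subsystem has full row rank) together with a Hall--Rado-type argument showing the minimum in the transversal-matroid rank formula is attained only at $S=I$, whence $\dim(\sum_i\text{span}\,\mathcal{B}_i)=|I|-1$. Second, surjectivity of the torus projection $T\to(\mathbb{C}^*)^{I'}$ should be argued via the derivative (which you correctly identify as $L^{\perp}\otimes\mathbb{C}\to\mathbb{C}^{I'}$) together with the fact that a homomorphism of equal-dimensional algebraic tori with bijective derivative is an isogeny, hence surjective; connectedness of $T$ is not needed. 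These are fixable details, but the irreducibility shortcut bypasses them entirely.
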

}

 An essential system $\{\BF_i\}_{i\in I}$  is said to be
\emph{variable-essential} if there are only $|I|-1$ variables
appearing effectively in $\BF_i$. Clearly, if
$\{\BF_i:i=0,\ldots,n\}$ is essential, {\color{black} then by Lemma \ref{lm-vess}} it is
variable-essential.
We call a variable-essential system $\BF=\{\BF_i:i=0,\ldots,n\}$
{\em strong essential} {\color{black}if there exists }an invertible variable
transformation $x_1 = \prod_{j=1}^n z_{j}^{m_{1j}},\, \ldots,\, x_n =
\prod_{j=1}^n z_j^{m_{nj}}$  such that  the image $\mathbb{G}$ of
$\BF$ under the above transformation is a generic sparse system satisfying:
$(1)$  $\mathbb{G}$ is essential.
$(2)$  $\rm{Span}_\mathbb{Z} (\mathcal {B}) = \mathbb{Z}^n$,
where $\mathcal {B}$ is
the set of all supports of $\mathbb{G}$.
\subsection{Sparse difference resultant}
This section will review the results associated with sparse
difference resultant, for details please refer to reference
\citep{gao-2015}.

An ordinary difference field $\F$ is a field with a unitary
operation $\sigma$ satisfying $\sigma(a+b)=\sigma(a)+\sigma(b)$ and $\sigma(ab)=\sigma(a)\sigma(b)$ for any $a, b\in\F$.
We call $\sigma$ the {\em
difference (transforming) operator} of $\F$. If $a\in\F$,  $\sigma(a)$ is called
the transform of $a$ and is denoted by $a^{(1)}$. And for
$n\in\mathbb{Z}^{+}$,  $\sigma^n(a)=\sigma^{n-1}(\sigma(a))$ is
called the $n$-th transform of $a$ and denoted by $a^{(n)}$,  with
the usual assumption $a^{(0)}=a$. By $a^{[n]}$ we mean the set $\{a,
a^{(1)}, \ldots, a^{(n)}\}$.
If {\color{black}$\sigma$} is an isomorphism of a difference field, then the field is
called {\em inversive}. Every difference field has an inversive closure
\citep{cohn}. In this paper, all difference fields are assumed to be inversive with characteristic zero.

{A subset $\mathcal {S}$ of a  difference extension field $\mathcal
{G}$ of $\mathcal {F}$ is said to be {\em transformally dependent}
over $\mathcal {F}$ if the set $\{\sigma^{k}(a)|a\in\mathcal
{S},k\geq0\}$ is algebraically dependent over $\mathcal {F}$, otherwise, is
called {\em transformally independent} over $\mathcal {F}$, or
a family of {\em difference indeterminates} over $\mathcal
{F}$.
We say that $\alpha$ is {\em transformally algebraic }or {\em transformally
transcendental} over $\mathcal {F}$ respectively if $\mathcal{S}$ consists of one element $\alpha$. The maximal
subset $\Omega$ of $\mathcal {G}$ which is transformally
independent over $\mathcal {F}$ is said to be a transformal
transcendence basis of $\mathcal {G}$ over $\mathcal {F}$. We use
$\dtrdeg \,\mathcal {G}/\mathcal {F}$  to denote the {\em difference
transcendence degree} of $\mathcal {G}$ over $\mathcal {F}$, which
is the cardinal number of $\Omega$.

Let $\F$ be an ordinary difference field with a transforming
operator $\sigma$.
Let $\Omega$ be the semigroup of elements generated by $\sigma$.
Let $\Y=\{ y_1,\ldots,y_n \}$ be indeterminants and $\F\{\Y\} = \F[\Omega\Y] $ the difference
polynomial ring, where $\Omega\Y = \{ \sigma^i y_j | i\ge 0, 1\le j\le n \}$ and
$\sigma(\sigma^i y_j) = \sigma^{i+1} y_j$.
%
{
Let $f$ be a difference polynomial in $\F\{\mathbb{Y}\}$.  The order of $f$
w.r.t. $y_i$ is defined to be the greatest number $k$ such that
$y_{i}^{(k)}$ appears effectively in $f$,  denoted by $\ord(f,
y_{i})$. If $y_{i}$ does not appear in $f$,  then we set $\ord(f,
y_{i})=-\infty$. The {\em order} of $f$ is defined to be $\max_{i}\,
\ord(f, y_{i})$,  that is, $\ord(f)=\max_{i}\, \ord(f, y_{i})$.}
A Laurent difference monomial of
order $s$ is in the form
$\prod_{i=1}^n\prod_{k=0}^s(y_i^{(k)})^{d_{ik}}$ where $d_{ik}$ are
integers which can be negative. A {\em Laurent difference
polynomial} over $\F$ is a finite linear combination of Laurent
difference monomials with coefficients in $\F$.
{A difference ideal $I$ in $\F\{\mathbb{Y}\}$ is an algebraic ideal which is closed under $\sigma$,
i.e., $\sigma(I)\subseteq I$.
If $I$ also has the property that $\sigma(a)\in I$ implies $a\in I$, it is called a {\em reflexive difference ideal}.
The concept of difference ideal and reflexive difference ideal can be generalized to Laurent difference case naturally.}

For every  Laurent difference polynomial $F\in\F\{\Y,\Y^{-1}\}$,
there exists a unique Laurent difference monomial $M$ such that
$M\cdot F$  is the {\em norm form} of $F$, denoted by N$(F)$, which satisfies 1) $M\cdot
F\in\F\{\Y\}$ and 2) for any Laurent difference monomial $T$ with
$T\cdot F\in\F\{\Y\}$, $T\cdot F$ is divisible by $M\cdot F$ as
polynomials. The order and degree of $\norm(F)$ is defined to be the
{\em  order} and {\em degree} of $F$, denoted by $\ord(F)$ and
$\deg(F)$.}

Suppose $\mathcal
{A}_i=\{M_{i0},M_{i1},\ldots,M_{il_i}\}\,(i=0,1,\ldots,n)$ are
finite sets of Laurent difference monomials in $\mathbb{Y}.$
Consider $n+1$ {\em generic Laurent difference polynomials} defined
over $\mathcal {A}_0,\ldots,\mathcal{A}_n$:
\begin{equation} \label{eq-sparseLaurent}
\mathbb{P}_i= u_{i0}M_{i0} + \sum\limits_{k=1}^{l_i}u_{ik}
M_{ik}\,~~~(i=0,\ldots,n),
\end{equation} where all the $u_{ik}$ are  transformally independent over
$\F$.
 Denote
\begin{equation} \label{eq-uu}
 \mathbf{u}_i=(u_{i0},u_{i1},\ldots,u_{il_i})\,\,(i=0,\ldots,n) \hbox{ and }
   \mathbf{u}=\bigcup_{i=0}^n\mathbf{u}_i.
\end{equation}
The number $ l_i +1$ is called the {\em size} of $\mathbb{P}_i$ and
$\mathcal{A}_i$ is called the {\em support} of $\mathbb{P}_i$.
To avoid this triviality in the sequel we assume that $l_i\geq1\,(i=0,\ldots,n)$.

\begin{define}{\rm(\citep[Definition 11]{gao-2015})}\label{def-tdes} A set of Laurent difference
polynomials of the form (\ref{eq-sparseLaurent}) is called  {\em
Laurent transformally essential} if {\color{black}for every $\mathbb{P}_i$} there exist
$k_i\,(i=0,\ldots,n)$ with $1\leq k_i\leq l_i$ such that
$\dtrdeg\,\Q\langle\frac{M_{0k_0}}{M_{00}},$
$\frac{M_{1k_1}}{M_{10}},\ldots,\frac{M_{nk_n}}{M_{n0}}\rangle/\Q=n.$
In this case, we also say that $\mathcal{A}_0,\ldots,\mathcal{A}_n$
form a Laurent transformally essential system.
\end{define}
This definition generally means that for any Laurent transformally essential system $\P$, $\dtrdeg$ $(\F\{\bu\}[\P]/\F\{\bu\}) =n$ which implies $[\P]\cap \F\{\bu\}$ has codimension one.
Note that, for $n+1$ generic difference
polynomial system with given order and degree larger than zero, the system is
Laurent transformally essential since one may take $M_{i0} = 1$ and $M_{ik_i} = y_i$ for $1 \le i\le n$.
Hence, for a random difference system with $n+1$ Laurent difference polynomials, it satisfies the
Laurent transformally essential properties with probability one.

Let $\mathbbm{m}$ be the set of all difference monomials in $\Y$ and
$[\norm(\P_0),\ldots,\norm(\P_n)]$ the difference ideal generated by
$\norm(\P_i)$ in $\Q\{\Y,\bu_0,\ldots,\bu_n\}$.
Let
\begin{eqnarray}
\mathcal{I}_{\Y,\bu}&=&([\norm(\P_0),\ldots,\norm(\P_n)]:\mathbbm{m}),\label{eq-I}\\
\mathcal{I}_{\bu}&=& \mathcal{I}_{\Y,\bu}\cap\Q\{
\bu_0,\ldots,\bu_n\}.\label{eq-IU}
 \end{eqnarray}

Now suppose $\P=\{\P_0,\ldots,\P_n\}$ is a Laurent transformally
essential system.
%
Since $\I_{\bu}$ defined in (\ref{eq-IU}) is a reflexive prime
difference ideal of codimension one, there exists a unique
irreducible difference polynomial $\SR(\bu)=\SR(\bu_{0},\ldots,\bu_{n})$
$\in\Q\{\bu_{0},\ldots,\bu_{n}\}$ such that $\SR$ can serve as the
first polynomial in each characteristic set of $\I_{\bu}$ w.r.t. any
ranking endowed on $\bu_{0},\ldots,\bu_{n}$~\citep{gao-2015}. 
Thus the definition of sparse difference resultant is given as
follows:
\begin{define}{\rm(\citep[Definition 15]{gao-2015})}\label{def-sparse}
The above
$\SR(\bu_{0},\ldots,\bu_{n})\in\Q\{\bu_{0},\ldots,\bu_{n}\}$ is
defined to be the {\em sparse difference resultant} of the Laurent
transformally essential system $\P_0,\ldots,\P_n$, denoted by
$\Res_{\mathcal{A}_0,\ldots,\mathcal{A}_n}$ or
$\Res_{\,\P_0,\ldots,\P_n}$, where $\mathcal{A}_i$ is the support of $\P_i$ for $ i=0, 1, \ldots,n$. When all the $\mathcal{A}_i$ are equal to
the same $\mathcal{A}$, we simply denote it by $\Res_\mathcal{A}$.
\end{define}

By Definition \ref{def-sparse}, the existence of sparse difference resultant of a Laurent
difference polynomial system is attributed to determine
whether the system is Laurent transformally essential, which is a cumbersome procedure. In
\citep{gao-2015}, the authors build a one-to-one correspondence
between a difference polynomial system and so-called symbolic
support matrix and then use the matrix to discriminate whether a Laurent
difference system is transformally essential.

Specifically, let $B_i = \prod_{j=1}^n\prod_{k = 0}^s
(y_{j}^{(k)})^{d_{ijk}}\,(i=1,\ldots,m)$ be $m$ Laurent difference
monomials. Introduce a new algebraic indeterminate $x$ and let
$$d_{ij} = \sum_{k=0}^{s} d_{ijk}x^{k}\,~~~ (i=1,\ldots,m ,j=1,\ldots,n)$$
be univariate polynomials in $\mathbb{Z}[x]$. If $\ord(B_i,y_j) =
-\infty$, then set $d_{ij}=0$.
The vector  $(d_{i1},d_{i2},$ $\ldots,d_{in})$ is called the {\em
symbolic support vector} of $B_i$.
The matrix $M=(d_{ij})_{m\times n}$
is called the {\em symbolic support matrix} of $B_1,\ldots, B_m$.

Consider the set of generic Laurent difference polynomials defined
in (\ref{eq-sparseLaurent}),
\begin{equation*}
\P_i= u_{i0}M_{i0} + \sum\limits_{k=1}^{l_i}u_{ik}
M_{ik}\,~~~(i=0,\ldots,n).
\end{equation*}
Let $\beta_{ik}$ be the symbolic support vector of { $M_{ik}/M_{i0}, k=1,\ldots,l_i$}.
Then the vector $w_i = \sum_{k=1}^{l_i} u_{ik}\beta_{ik}$ is called
the {\em symbolic support vector} of $\P_i$ and the matrix $M_\P$
whose rows are $w_0,\ldots,w_n$ is called the {\em symbolic support
matrix} of $\P_0,\ldots,\P_n$. { Therefore, we have

\begin{theorem}\label{th-ex}{\rm(\citep[Theorem 31]{gao-2015})} A sufficient and necessary
condition for $\P_0,\ldots,\P_n$ form a Laurent transformally
essential system is the rank of $M_\P$ is equal to $n$.
\end{theorem}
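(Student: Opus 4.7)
The plan is to translate the essential condition from Definition \ref{def-tdes} into a rank condition on an $(n+1)\times n$ matrix of symbolic support vectors over $\Q(x)$, and then match this with the rank of $M_\P$ via a multilinear-expansion argument. The key background ingredient is the correspondence developed in the cited work: for any finite list of Laurent difference monomials $B_1,\ldots,B_m$ in $\Y$, the difference transcendence degree $\dtrdeg\,\Q\langle B_1,\ldots,B_m\rangle/\Q$ equals the $\Q(x)$-rank of its symbolic support matrix. Applying this to $B_i = M_{ik_i}/M_{i0}$, Definition \ref{def-tdes} becomes: $\P_0,\ldots,\P_n$ is Laurent transformally essential iff there exist $k_i\in\{1,\ldots,l_i\}$ for $i=0,\ldots,n$ such that the $(n+1)\times n$ matrix $N_{(k_i)}$ with rows $\beta_{0k_0},\ldots,\beta_{nk_n}$ has rank $n$ over $\Q(x)$.

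The bridge to $M_\P$ is a linear-algebra expansion. Fix any index $i^\ast\in\{0,\ldots,n\}$ and consider the $n\times n$ submatrix of $M_\P$ formed by the rows $w_i = \sum_{k=1}^{l_i} u_{ik}\beta_{ik}$ with $i\neq i^\ast$. Multilinearity of the determinant in its rows gives
$$\det(w_i)_{i\neq i^\ast} \;=\; \sum_{(k_i)_{i\neq i^\ast}} \Bigl(\prod_{i\neq i^\ast} u_{ik_i}\Bigr)\,\det(\beta_{ik_i})_{i\neq i^\ast},$$
where each $k_i$ ranges over $\{1,\ldots,l_i\}$. The monomials $\prod_{i\neq i^\ast} u_{ik_i}$ are pairwise distinct in $\Q[\bu]$, and the $u_{ik}$ are algebraically independent over $\Q$. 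Hence this $n\times n$ minor is nonzero iff at least one of the coefficients $\det(\beta_{ik_i})_{i\neq i^\ast}$ is a nonzero element of $\Q(x)$, i.e., iff the $n$ vectors $\{\beta_{ik_i}\}_{i\neq i^\ast}$ are linearly independent over $\Q(x)$ for some choice of $(k_i)_{i\neq i^\ast}$.

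Combining the two steps yields both directions. If $\rk(M_\P)=n$, some $n\times n$ minor is nonzero, which by the expansion above forces some $\det(\beta_{ik_i})_{i\neq i^\ast}\neq 0$; setting $k_{i^\ast}$ arbitrarily in $\{1,\ldots,l_{i^\ast}\}$ produces a full tuple $(k_i)_{i=0}^n$ for which $N_{(k_i)}$ still has rank $n$, giving the essential condition. Conversely, if $N_{(k_i)}$ has rank $n$ for some $(k_i)$, some row may be omitted without dropping rank, and the expansion in paragraph two then exhibits the corresponding $n\times n$ minor of $M_\P$ as a nonzero polynomial in the algebraically independent $u_{ik}$'s, so $\rk(M_\P)=n$. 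The main obstacle lies not in this multilinear bookkeeping but in the background correspondence between monomial $\dtrdeg$ and rank over $\Q(x)$; once that lemma from the cited framework is invoked, the rest is the clean linear-algebra argument above, which simultaneously explains why $M_\P$ with its $u_{ik}$-weighted rows is the right object to encode the transformal essentialness of the whole system.
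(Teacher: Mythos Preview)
The paper does not give its own proof of this statement: Theorem~\ref{th-ex} is quoted as a result from \cite{gao-2015} and no argument is supplied here. There is therefore nothing in the present text to compare your proposal against.

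That said, your argument is sound. The reduction of Definition~\ref{def-tdes} to a rank-$n$ condition on an $(n+1)\times n$ matrix of vectors $\beta_{ik_i}$ over $\Q(x)$ is precisely the background lemma from \cite{gao-2015} that you invoke, and your multilinear-expansion step is correct: the $n\times n$ minor $\det(w_i)_{i\ne i^\ast}$ lies in $\Q[\bu][x]$, the monomials $\prod_{i\ne i^\ast} u_{ik_i}$ indexing the expansion are pairwise distinct because each $u_{ik}$ carries both indices, and the algebraic independence of the $u_{ik}$ then forces the minor to vanish iff every coefficient $\det(\beta_{ik_i})_{i\ne i^\ast}\in\Q[x]$ vanishes. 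Both directions follow as you wrote. Your proposal would serve as a complete proof once the cited $\dtrdeg$--rank correspondence is granted.
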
}

Furthermore, we can use the symbolic support matrix to {\color{black}determine} certain
$\P_i$ such that their coefficients will not occur in the sparse
difference resultant, { which leads to the following definition:}
\begin{define}{\rm(\citep[Definition 33]{gao-2015})} \label{sup-ess}
Let $\TT\subset\{0,1,\ldots,n\}$. Then we call $\TT$ or $\P_\TT$
{\em super-essential} if the following conditions hold: (1)
$\card(\TT) - \rank(M_{\P_\TT}) = 1$ and (2) $\card(\JJ) =
\rank(M_{\P_\JJ})$ for each proper subset $\JJ$ of $\TT$.
\end{define}

The existence of super-essential system of a difference polynomial
system is given by the following theorem \citep{gao-2015}.
\begin{theorem}\label{the-sup}{\rm(\citep[Theorem 34]{gao-2015})}
If $\{\P_0,\ldots,\P_n\}$ is a Laurent transformally essential
system, then for any $\TT\subset\{0, 1,\ldots, n \}$, $\card(\TT) -
\rank(M_{\P_\TT}) \leq 1$ and there exists a unique $\TT$ which is
super-essential. { If $\TT$ is super-essential, then the sparse
difference resultant of $\{\P_0,\ldots,\P_n\}$  involves only the coefficients of $\P_i\, (i \in \TT)$.}
\end{theorem}

Therefore, let $\SR$ be the sparse difference resultant of a
Laurent transformally essential system (\ref{eq-sparseLaurent}).
Then  a strong essential system $\SC$ whose sparse resultant is
equal to $\SR$ can be obtained from (\ref{eq-sparseLaurent}).

{We introduce some notations which are needed to bound the order of $\SR$.}
Let $A=(a_{ij})$ be an $n\times n$ matrix where $a_{ij}$ is an
integer or $-\infty$. A {\em diagonal sum} of $A$ is any sum
$a_{1\sigma(1)}+a_{2\sigma(2)}+\cdots+a_{n\sigma(n)}$ with $\sigma$
a permutation of $1,\ldots,n$. If $A$ is an $m\times n$ matrix with
$k=\min\{m,n\}$, then a diagonal sum of $A$ is a diagonal sum of any
$k\times k$ submatrix of $A$. The {\em Jacobi number}  of a matrix
$A$ is the maximal diagonal sum of $A$, denoted by $\Jac(A)$.

Let $s_{ij}=\ord(\norm(\P_i),y_j)\,(i=0,\ldots,n;j=1,\ldots,n)$ and
$s_i=\ord(\norm(\P_i))$. We call the $(n+1)\times n$ matrix
$A=(s_{ij})$ the {\em order matrix} of $\P_0,\ldots,\P_n$. By
$A_{\hat{i}}$, we mean the submatrix of $A$ obtained by deleting the
$(i+1)$-th row from $A$.
{ Note that when we consider the sparse difference resultant, we compute it in the Laurent
difference polynomial ring. Hence, without loss of generality, we may assume that the normal form of $\P_i$ is itself.
Then, we use $\P$ to denote the set
$\{\norm(\P_0),\ldots,\norm(\P_n)\}$ and by $\P_{\hat{i}}$, we mean the set
$\P\backslash\{\norm(\P_i)\}$.} We call $J_i=\Jac(A_{\hat{i}})$ the {\em
Jacobi number} of the system $\P_{\hat{i}}$, also denoted by
$\Jac(\P_{\hat{i}})$.

\begin{theorem}\label{th-ord-jacobi2}{\rm(\citep[Theorem 51, 74; Lemma 16]{gao-2015})}
Let $\P$ be a Laurent transformally essential system and $\SR$ the
sparse difference resultant of $\P$. Then, $\SR$ is of minimal order in each
$u_{i0}$, and
\[\ord(\SR,\bu_i)=\left\{\begin{array}{lll}
-\infty&& \text{if}\quad\,J_i = -\infty,\\
h_i\leq J_i&& \text{if}\quad\,J_i \geq0.\end{array}\right.\]
Moreover, $\SR \in (\P_0^{[h_0]},\P_1^{[h_1]},\dots,\P_n^{[h_n]})$ in $\Q\{\Y,\Y^{-1},\bu_{0},\ldots,\bu_{n}\}$  .
\end{theorem}

\section{Main results}
In this section, we first present some theoretical results, and then give an efficient algorithm to
compute sparse difference resultant. {\color{black}We analyze the complexity of the algorithm and show the efficiency by three examples.}
\subsection{Theoretical preparations}
{\color{black}By Theorem~\ref{the-sup}, there exists a unique $\TT\subset\{0, 1,\ldots, n \}$ such that
$\TT$ is super essential.
Without loss of generality, we assume $\TT = \{ 0, 1, \ldots, m \}$,
where $m\le n$. The symbolic support matrix of $\P_\TT$ is}
\begin{equation}\label{mpt}
  {M_{\P_\TT}}=\left(
     \begin{matrix}
       w_{0,1}  & w_{0,2} & \ldots & w_{0,n} \\
       w_{1,1}  & w_{1,2} & \ldots & w_{1,n} \\
       \ldots   & \ldots  & \ldots & \ldots  \\
       w_{m,1}  & w_{m,2} & \ldots & w_{m,n} \\
     \end{matrix}
   \right)_{(m+1)\times n},
   \end{equation}
and $\rank(M_{\P_\TT}) = m$. Then we choose a submatrix  from ${M_{\P_\TT}}$ whose column rank is $m$.
Without loss of generality, we assume that the first $m$ columns in  ${M_{\P_\TT}}$ is of rank $m$.
Now, we set $y_i, i=m+1,\ldots,n$, to $1$ in $\P_\TT$ to obtain a new difference polynomial system $\widetilde{\P_\TT}$ whose symbolic support matrix is
\begin{equation}\label{mpt1}
  {M_{\widetilde{\P_\TT}}}=\left(
     \begin{matrix}
       w_{0,1} & w_{0,2} & \ldots & w_{0,m} \\
       w_{1,1} & w_{1,2} & \ldots & w_{1,m} \\
       \ldots  & \ldots  & \ldots & \ldots  \\
       w_{m,1} & w_{m,2} & \ldots & w_{m,m} \\
     \end{matrix}
   \right)_{(m+1)\times m}.
   \end{equation}

Since $\rank(M_{\widetilde{\P_\TT}}) = m$, $\widetilde{\P_\TT}$ is Laurent transformally essential.
Let $\widetilde{w}_i=(w_{i,1}, w_{i,2},$ $ \ldots, w_{i,m})$ be the symbolic support vector of $\widetilde{\P_i}$
for $i=0,\ldots,m$ in $\widetilde{\P_\TT}$, and $\widetilde{\SR}$ be the sparse difference resultant of the system $\widetilde{\P_\TT}$.
{\color{black}Let $\SR$ be the sparse difference resultant of $\P$,
we will show that $\SR = \widetilde{\SR}$.} Before this, we need the following lemma which can be shown by linear algebra
and we omit the proof.

\begin{lemma}\label{matrix-ess}
Let $M$ be a matrix of row codimension one  such that any proper subset of rows is linearly
independent. Let $\widetilde{M}$ be a submatrix of $M$ with the same number of rows such that
$\rank(M) = \rank(\widetilde{M})$. Then, any proper subset of rows of  $\widetilde{M}$ is linearly independent.
\end{lemma}

\begin{lemma}\label{wpt-ess}
$\widetilde{\P_\TT}$ forms a super essential system.
\end{lemma}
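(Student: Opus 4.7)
The plan is to verify the two defining conditions of super-essentiality from Definition \ref{sup-ess} for $\widetilde{\P_\TT}$, leveraging the fact that $M_{\widetilde{\P_\TT}}$ is literally the submatrix of $M_{\P_\TT}$ obtained by keeping the first $m$ columns.

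For condition (1), namely $\card(\TT) - \rank(M_{\widetilde{\P_\TT}}) = 1$: since the first $m$ columns of $M_{\P_\TT}$ were chosen to be of rank $m$, we already have $\rank(M_{\widetilde{\P_\TT}}) = m$, and $\card(\TT) = m+1$, so the equality holds.

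For condition (2), the task is to show that for every proper subset $\JJ \subsetneq \TT$, one has $\rank(M_{\widetilde{\P_\JJ}}) = \card(\JJ)$. The strategy is to reduce this to the known super-essentiality of $\P_\TT$ by arguing that the column rank is unchanged under the column-restriction from $M_{\P_\JJ}$ to $M_{\widetilde{\P_\JJ}}$. The key observation is the following: since $\rank(M_{\P_\TT}) = m$ and its first $m$ columns already span the column space, every column with index $j > m$ in $M_{\P_\TT}$ can be written as a (base field) linear combination $\sum_{i=1}^m c_{ij}$ (column $i$). This linear relation holds entry-wise in every row, so restricting the rows to any index set $\JJ \subseteq \TT$ preserves the identity. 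Therefore the last $n-m$ columns of $M_{\P_\JJ}$ still lie in the span of its first $m$ columns, giving $\rank(M_{\P_\JJ}) = \rank(M_{\widetilde{\P_\JJ}})$. Combined with super-essentiality of $\P_\TT$, which provides $\rank(M_{\P_\JJ}) = \card(\JJ)$ for $\JJ \subsetneq \TT$, condition (2) follows.

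The main (mild) point to be careful about is ensuring that the column-dependency coefficients $c_{ij}$ live in the same ambient field used when computing ranks of row-subsystems, so that the same linear relation makes sense after row restriction. This is routine since all entries of $M_{\P_\TT}$ lie in the polynomial ring over $\Q$ in $x$ and the generic coefficients $u_{ik}$, and the $c_{ij}$ arising from reducing the first $m$ columns to a column basis lie in the corresponding fraction field, so the relation descends verbatim to any row-submatrix. Once this is noted, the proof collapses to the two-line verification above.
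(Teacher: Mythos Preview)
Your proof is correct, and it proceeds by a genuinely different route than the paper's.

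The paper argues by contradiction and works with \emph{row} relations: assuming $\widetilde{\P_\TT}$ is not super essential, Theorem~\ref{the-sup} produces a proper $\TT'\subsetneq\TT$ with $\widetilde{\P_{\TT'}}$ super essential, hence a nontrivial relation $\sum_{i\in\TT'} f_i\widetilde{w}_i=0$ with all $f_i\neq 0$. Separately, the super-essentiality of $\P_\TT$ gives $\sum_{i\in\TT} g_i w_i=0$ with all $g_i\neq 0$, and truncating to the first $m$ coordinates yields $\sum_{i\in\TT} g_i\widetilde{w}_i=0$. Because $\TT'\neq\TT$, these two row relations are linearly independent, forcing $\rank(M_{\widetilde{\P_\TT}})\le m-1$, a contradiction.

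Your argument is instead a direct verification via \emph{column} relations: since the first $m$ columns of $M_{\P_\TT}$ already span its column space, the remaining columns are fixed linear combinations of them, and these combinations survive restriction to any row subset $\JJ$. Hence $\rank(M_{\P_\JJ})=\rank(M_{\widetilde{\P_\JJ}})$ for every $\JJ\subseteq\TT$, and condition~(2) is inherited verbatim from $\P_\TT$. This avoids the appeal to Theorem~\ref{the-sup} and the contradiction setup, and makes transparent why nothing is lost when dropping the last $n-m$ columns. The paper's approach, on the other hand, highlights the row-relation viewpoint that is reused later (e.g.\ in Lemma~\ref{lm-apess}), so it dovetails more naturally with the subsequent arguments.
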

\begin{proof}
%
Let $M = M_{\P_\TT}$ and $\widetilde{M} = M_{\widetilde{\P_\TT}}$. Apply Lemma~\ref{matrix-ess} to $M$ and $\widetilde{M}$, we have
$\widetilde{\P_\TT}$ forms a super essential system.
\end{proof}

In analogy with the algebraic case, we define a total ordering of
$\widetilde{\P_\TT}^{\infty} = \{ \sigma^j\widetilde{\P_i}, 0\le i\le m, j\ge 0 \}$,
$\sigma^j\widetilde{\P_i} < \sigma^k\widetilde{\P_l}$ if and only if either $i<l$ or $i=l$ and $j<k$.
Then, this ordering can be extended to a total ordering among the subsets of
$\widetilde{\P_\TT}^{\infty}$ as defined in Section~\ref{SECT:preliminaries}.
{ Consider all possible systems of the form
\begin{equation}\label{eq-algess}
\begin{array}{lllll}
\widetilde{\mathcal {P}} = \big\{ & \sigma^{i_{01}}\widetilde{\P_0}, & \sigma^{i_{02}}\widetilde{\P_0}, &\ldots, & \sigma^{i_{0s_0}}\widetilde{\P_0}, \\
  & \sigma^{i_{11}}\widetilde{\P_1}, & \sigma^{i_{12}}\widetilde{\P_1}, &\ldots, & \sigma^{i_{1s_1}}\widetilde{\P_1}, \\
  & \ldots,  & \ldots,  & \ldots,  & \ldots, \\
  & \sigma^{i_{m1}}\widetilde{\P_m}, & \sigma^{i_{m2}}\widetilde{\P_m}, &\ldots, &\sigma^{i_{ms_m}}\widetilde{\P_m}  \big\},
 \end{array}
 \end{equation}
which are algebraically essential and choose the one that is minimal with respect to the above introduced ordering.
Suppose that $\widetilde{\mathcal {P}}$ is an algebraically essential system with minimal ordering, which always exists due to the proof
of Theorem~68 in \citep{gao-2015}.

{ Note that, in order to define the symbolic support vector of $P\in \widetilde{\mathcal {P}}$, we may assume the symbolic support vector of the variable $\sigma^i y_j$ is $e_{i,j}$ where $e_{i,j}$ are linearly independent. Then, the symbolic support vector of any monomials in $P\in \widetilde{\mathcal {P}}$ can be defined. Since $\sigma (\sigma^i y_j) = \sigma^{i+1} y_j$, we may set $e_{i,j} = x^i e_{j}$ for any $i$ and $1\le j\le m$, where $e_1, \ldots, e_m$ form a standard basis in $\Q^m$, then  $x^i e_j$ are linear independent over $\Q$. Since $\sigma^k u_{i,j}$ is transcendental over $\{\sigma^p u_{i,j}, \sigma^l y_s | p<k, 0\le i\le n, 1\le j\le l_i , l\ge 0, 1\le s\le n \}$, we have that the symbolic support vector of $P=\sigma^i \widetilde{\P_j} $ can be defined as $x^i \widetilde\omega_j$, where $\widetilde\omega_j$ is the symbolic support vector of $\widetilde{\P_j}$. Though we treat the polynomials in $\widetilde{\mathcal {P}}$ as difference polynomial or algebraic polynomial, the linear dependence of the symbolic support vector of the polynomials in $\widetilde{\mathcal {P}}$ will not change in some sense. }

Let
\begin{equation}
\begin{array}{lllll}
 \mathcal{P} = \big\{  & \sigma^{i_{01}}{\P_0}, & \sigma^{i_{02}}{\P_0}, &\ldots, &\sigma^{i_{0s_0}}{\P_0}, \\
  & \sigma^{i_{11}}{\P_1}, & \sigma^{i_{12}}{\P_1}, &\ldots, &\sigma^{i_{1s_1}}{\P_1}, \\
  & \ldots,  & \ldots,  & \ldots,  & \ldots, \\
  & \sigma^{i_{m1}}{\P_m}, & \sigma^{i_{m2}}{\P_m}, &\ldots, &\sigma^{i_{ms_m}}{\P_m}  \big\},
   \end{array}
 \end{equation}
and we have the following result. }

\begin{lemma}\label{lm-apess}
If $\widetilde{\mathcal {P}}$ is algebraically essential, then $\mathcal{P}$ forms an algebraically essential system.
\end{lemma}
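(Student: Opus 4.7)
The plan is to verify both defining conditions of algebraic essentiality for $\mathcal{P}$, exploiting that the matrix $M_{\mathcal{P}}$ and $M_{\widetilde{\mathcal{P}}}$ are linked by a column-deletion: $M_{\widetilde{\mathcal{P}}}$ is obtained from $M_{\mathcal{P}}$ by dropping the columns indexed by the variables $y_j^{(k)}$ with $j>m$, because $\widetilde{\P_j}$ arises from $\P_j$ by setting $y_{m+1},\ldots,y_n$ to $1$. Since column-deletion can only decrease rank, $\rk(M_{\mathcal{S}})\geq\rk(M_{\widetilde{\mathcal{S}}})$ for every subsystem $\mathcal{S}\subseteq\mathcal{P}$.

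For the subset condition, take any proper subset $\mathcal{J}\subsetneq\mathcal{P}$ with corresponding $\widetilde{\mathcal{J}}\subsetneq\widetilde{\mathcal{P}}$. The essentiality of $\widetilde{\mathcal{P}}$ forces $\rk(M_{\widetilde{\mathcal{J}}})=|\mathcal{J}|$, and the projection inequality together with the trivial row bound gives $|\mathcal{J}|\leq\rk(M_{\mathcal{J}})\leq|\mathcal{J}|$, i.e.\ equality. Applied to $\mathcal{J}=\mathcal{P}$ itself, the same inequality delivers the easy half of the full-rank condition, $\rk(M_{\mathcal{P}})\geq|\mathcal{P}|-1$.

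The heart of the proof is producing a non-trivial linear dependence among the rows of $M_{\mathcal{P}}$ to obtain $\rk(M_{\mathcal{P}})\leq|\mathcal{P}|-1$. For this I would lift the dependence from $M_{\widetilde{\mathcal{P}}}$ through the difference level. By hypothesis $\P_\TT$ is super essential, and by Lemma \ref{wpt-ess} so is $\widetilde{\P_\TT}$; in both cases the $m+1$ symbolic support vectors have rank $m$, so the relation space among the $w_i$'s (respectively $\widetilde{w}_i$'s) is one-dimensional over the fraction field of $\Q\{\bu\}[x]$. Moreover, the super essential relation $\sum_{j=0}^{m}g_j(x)w_j=0$ of $\P_\TT$ projects (by setting the same $y_i$'s to $1$) to $\sum_{j=0}^{m}g_j(x)\widetilde{w}_j=0$, which therefore generates the relation space of $\widetilde{\P_\TT}$. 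Now the essentiality of $\widetilde{\mathcal{P}}$ yields a non-trivial algebraic dependence $\sum_{(j,k)}a_{(j,k)}\widetilde{w}_{(j,k)}=0$ among the rows of $M_{\widetilde{\mathcal{P}}}$; using the identification that the symbolic support vector of $\delta^{i_{jk}}\widetilde{\P_j}$ is the shift-by-$x^{i_{jk}}$ of $\widetilde{w}_j$, this rewrites as $\sum_j h_j(x)\widetilde{w}_j=0$ with $h_j=\sum_k a_{(j,k)}x^{i_{jk}}$. By one-dimensionality, $(h_0,\ldots,h_m)$ is a nonzero scalar multiple of $(g_0,\ldots,g_m)$, so $\sum_j h_j w_j=0$ as well, which unwraps to $\sum_{(j,k)}a_{(j,k)}w_{(j,k)}=0$: the sought-after non-trivial dependence among the rows of $M_{\mathcal{P}}$, with the very same coefficients. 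Combined with the lower bound, $\rk(M_{\mathcal{P}})=|\mathcal{P}|-1$.

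The main obstacle I anticipate is the bookkeeping around the correspondence between the algebraic symbolic support vectors $w_{(j,k)}$ of the shifted polynomials and their difference-level counterparts $w_j$: I need the shift identity $w_{(j,k)}\leftrightarrow x^{i_{jk}}w_j$ to behave correctly on both the original and the simplified systems (including how the $\bu$-transforms enter as coefficients), and I need to verify that the scalar relating $(h_\bullet)$ to $(g_\bullet)$ is nonzero so that the lifted dependence remains non-trivial — guaranteed here because the $a_{(j,k)}$ coming from essentiality of $\widetilde{\mathcal{P}}$ are nonzero. The remaining steps are routine rank inequalities and the basic projection principle for column-deletion.
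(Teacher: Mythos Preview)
Your proposal is correct and rests on the same idea as the paper: encode the algebraic dependence among the rows of $M_{\widetilde{\mathcal P}}$ as a $\Q\{\bu\}[x]$-relation $\sum_j h_j(x)\,\widetilde w_j=0$ via the shift identity, and then use that the relation space of the super essential system is one-dimensional to conclude the same combination annihilates the $w_j$. The paper packages this step by contradiction (if $\sum_j h_j w_j\neq 0$, a row operation on $M_{\P_\TT}$ produces a nonzero vector supported on columns $m{+}1,\ldots,n$, forcing $\rk(M_{\P_\TT})=m+1$), whereas you argue directly via proportionality to the $(g_j)$; you also verify the proper-subset rank condition explicitly, which the paper's proof leaves implicit.
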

\begin{proof}
{Since $\widetilde{\mathcal{P}}$ is an algebraically essential system, we have that the row corank of
its symbolic support matrix $M_{\widetilde{\mathcal{P}}}$ is $1$.
Since $\widetilde{\mathcal {P}}$ is algebraically essential,
we have $\sum_{\sigma^i\widetilde{\P_j}\in \widetilde{\mathcal{P}}} f_{ij} \widetilde{\omega_{ij}} = 0$, where
$0\ne f_{ij}\in \Q\{\bu_0,\ldots,\bu_n\}$ for any $i,j$ and
$\widetilde{\omega_{ij}}$ is the symbolic support vector of $\sigma^i\widetilde{\P_j}$.
Since $\widetilde{\omega_{ij}} = x^i\widetilde{\omega_{j}}$,
where $\widetilde{\omega_{j}}$ is the symbolic support vector of $\widetilde{\P_j}$,
 we have
$\sum_{\sigma^i\widetilde{\P_j}\in \widetilde{\mathcal{P}}} f_{ij} x^i\widetilde{\omega_{j}} = 0$.

Assume that  $\mathcal{P}$
does not form an algebraically essential system. Then its symbolic support matrix $M_{\mathcal{P}}$ has full rank.
Thus $\sum_{\sigma^i{\P_j}\in {\mathcal{P}}} f_{ij} \omega_{ij} \ne 0$, where $\omega_{ij}$ is the symbolic support vector of $\sigma^i{\P_j}$, or  equivalently,
$\sum_{\sigma^i{\P_j}\in {\mathcal{P}}} f_{ij} x^i\omega_{j} \ne 0$ where $\omega_{j}$ is the symbolic support vector of $\P_j$.
Now we consider the rank of $M_{\P_{\TT}}$. By Lemma~\ref{wpt-ess}, $\widetilde{\P_{\TT}}$ is super essential,
each $m\times m$ submatrix of $M_{\widetilde{\P_{\TT}}}$ is of full rank.
Since $\sum_{\sigma^i\widetilde{\P_j}\in \widetilde{\mathcal{P}}} f_{ij} x^i\widetilde{\omega_{j}} = 0$ and
 $\sum_{\sigma^i{\P_j}\in {\mathcal{P}}} f_{ij} x^i\omega_{j} \ne 0$,
 through a row transformation for $M_{\P_{\TT}}$
 over $\Q\{\bu_{0},\ldots,\bu_{n}\}[x]$, we can obtain a row vector of form
 $(0,\ldots, 0, h_{m+1},\ldots, h_n)$ and $h_{m+1},\ldots, h_n$ are not all zeros.
 Hence, the rank of $M_{\P_{\TT}}$ is $m+1$.
 This contradicts to the fact that $\P_{\TT}$ is a Laurent transformally essential system. Hence, $\mathcal{P}$ forms an algebraically essential system. }
\end{proof}

\begin{theorem} \label{sr=wsr}
With above notations, $\SR = \widetilde{\SR}$ up to a multiplicative constant.
\end{theorem}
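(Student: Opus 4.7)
The plan is to compare $\SR$ and $\widetilde{\SR}$ through their common characterization as irreducible generators of coefficient elimination ideals and to prove two-way divisibility. Both polynomials live in $\Q\{\bu_{0},\ldots,\bu_{m}\}$ (only $\Y$-variables are being specialized, so the coefficient indeterminates are untouched), and each vanishes precisely on the Kolchin closure of the locus of specializations $\bar\bu$ for which the corresponding generic system admits a common solution over some difference extension. Establishing $\widetilde{\SR}\mid\SR$ and $\SR\mid\widetilde{\SR}$ then forces, by irreducibility, equality up to a nonzero rational factor, which the standard sign normalization of the sparse difference resultant pins down to $\pm 1$.

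The direction $\widetilde{\SR}\mid\SR$ is essentially by specialization: for any specialization $\bar\bu$, a common solution $(\bar y_1,\ldots,\bar y_m)$ of $\widetilde{\P_\TT}|_{\bu=\bar\bu}$ produces a common solution $(\bar y_1,\ldots,\bar y_m,1,\ldots,1)$ of $\P_\TT|_{\bu=\bar\bu}$, because $\widetilde{\P_i}$ is by construction $\P_i|_{y_{m+1}=\cdots=y_n=1}$. Hence $\widetilde{\SR}(\bar\bu)=0$ forces $\SR(\bar\bu)=0$, and irreducibility of $\widetilde{\SR}$ yields $\widetilde{\SR}\mid\SR$.

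The reverse $\SR\mid\widetilde{\SR}$ is the substantive direction and is where the hypothesis that the first $m$ columns of $M_{\P_\TT}$ have rank $m$ enters. I would express each of the last $n-m$ columns of $M_{\P_\TT}$ as a $\Q\{\bu\}[x]$-linear combination of the first $m$ columns, then interpret $x$ as the action of $\sigma$ on exponents in order to construct an invertible Laurent-monomial difference substitution of the form $y_i\mapsto z_i\prod_{j>m}z_j^{a_{ij}(x)}$ that converts $\P_\TT$ into a system of Laurent difference polynomials in $z_1,\ldots,z_m$ alone, coinciding term-by-term with $\widetilde{\P_\TT}$. Under this substitution, every common zero of $\P_\TT|_{\bu=\bar\bu}$ descends to a common zero of $\widetilde{\P_\TT}|_{\bu=\bar\bu}$, so $\SR(\bar\bu)=0$ implies $\widetilde{\SR}(\bar\bu)=0$, giving $\SR\mid\widetilde{\SR}$.

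The main obstacle is producing this difference-compatible monomial substitution rigorously: translating column dependencies over $\Z[x]$ into an invertible Laurent-monomial substitution that commutes with $\sigma$ is the difference analogue of the transformation in Lemma \ref{lm-algess}, and requires carefully realizing the relevant column relations via a unimodular completion. Lemma \ref{wpt-ess} (ensuring $\widetilde{\P_\TT}$ remains super essential) and Lemma \ref{lm-apess} (promoting algebraic essentiality from $\widetilde{\mathcal{P}}$ to $\mathcal{P}$, so neither sparse algebraic resultant collapses) supply the structural input needed to make this step work, while the rank-$m$ condition on the first $m$ columns of $M_{\P_\TT}$ is precisely what guarantees invertibility of the substitution.
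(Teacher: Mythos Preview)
Your easy direction is morally the same as the paper's: the specialization $y_{m+1}=\cdots=y_n=1$ induces the inclusion $[\P_\TT]\cap\Q\{\bu_0,\ldots,\bu_m\}\subset[\widetilde{\P_\TT}]\cap\Q\{\bu_0,\ldots,\bu_m\}$, which places $\SR$ in the latter ideal. (Your phrasing via ``$\widetilde{\SR}(\bar\bu)=0\Rightarrow\SR(\bar\bu)=0$'' relies on identifying the zero set of the resultant with the existence of a solution, which is only generically true in the difference setting; the ideal-theoretic statement is what actually carries the argument.)

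The reverse direction, however, has a genuine gap. You propose to express the last $n-m$ columns of $M_{\P_\TT}$ as combinations of the first $m$ and read off a Laurent monomial substitution $y_i\mapsto z_i\prod_{j>m}z_j^{a_{ij}(x)}$. But the entries of $M_{\P_\TT}$ are $\Q\{\bu\}[x]$-linear in the generic coefficients $u_{ik}$, so the column relations live in $\Q\{\bu\}(x)$, not in $\Z[x]$; you even switch between ``$\Q\{\bu\}[x]$'' and ``$\Z[x]$'' in the proposal. A monomial difference substitution on the $\Y$-variables must have exponents in $\Z[x]$ (interpreting $x$ as $\sigma$), independent of $\bu$. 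There is no reason for such $\bu$-free relations to exist, so the promised invertible monomial map cannot be built this way, and the analogy with Lemma~\ref{lm-algess} breaks down.

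The paper circumvents this entirely. It never constructs a monomial change of variables. Instead it introduces an intermediate polynomial $\SR'$, the \emph{algebraic} sparse resultant of the system $\mathcal{P}$, whose existence is precisely what Lemma~\ref{lm-apess} guarantees. Since $\widetilde{\mathcal{P}}$ is the specialization $y_j\to1$ of $\mathcal{P}$, one has $\SR'\in(\widetilde{\mathcal{P}})$, hence $\widetilde{\SR}\mid\SR'$, hence $\widetilde{\SR}=\SR'$ by irreducibility. The remaining equality $\SR'=\SR$ is obtained by a rank comparison: $\SR'\in(\mathcal P)\subset[\P_\TT]\cap\Q\{\bu\}$, while $\SR\in[\P_\TT]\cap\Q\{\bu\}\subset[\widetilde{\P_\TT}]\cap\Q\{\bu\}$; since $\SR$ and $\SR'=\widetilde{\SR}$ are the lowest-rank elements of these two nested elimination ideals respectively, each bounds the rank of the other, forcing equality. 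This is exactly the step where Lemma~\ref{lm-apess} is indispensable, and it replaces your attempted substitution argument.
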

\begin{proof}
{By the definition of $\widetilde{\mathcal{P}}$, we have that $\widetilde{\SR}$ is the sparse
resultant of $\widetilde{\mathcal{P}}$ since it has the lowest order in $\bu_n$ due to the ordering introduced before.
By Lemma~\ref{lm-apess}, $\mathcal{P}$ is an algebraically essential system. Let $\SR'$ be the
sparse resultant of $\mathcal{P}$, then $\SR'\in( \widetilde{\mathcal{P}} )$ since $\widetilde{\mathcal{P}}$
is obtained by setting $y_{j}$ to $1$ in $\mathcal{P}$ for $j=m+1,\ldots, n$.
Hence
$ \SR' | \widetilde{\SR}$. Then, $\widetilde{\SR} = \SR'$ up to a multiplicative constant since they are irreducible.

Now, we show that $\SR = \SR'$ up to a multiplicative constant.
{ By the definition of $\widetilde{\mathcal{P}}$ and
Theorem~\ref{th-ord-jacobi2},
$\SR'$ has the lowest order in $\big[\widetilde{{\P_{\TT}}}\big]_L\cap \Q\{\bu_0,\ldots,\bu_m\}$ for each $u_{i0}$, where $[ \P_{\TT}]_L$ is the ideal generated by $\P_{\TT}$ in Laurent difference polynomial ring   $\Q\{\Y,\Y^{-1},\bu_{0},\ldots,\bu_{m}\}$ .
We claim that
$\SR'$ has the lowest rank in $[{{\P_{\TT}}}]_L\cap \Q\{\bu_0,\ldots,\bu_m\}$ for each $u_{i0}$.
If it is not the case, then $\SR$ has a lower rank than $\SR'$ in $u_{i0}$ for some $i$.
Since $[\P_{\TT}]_L\cap \Q\{\bu_{0},\ldots,\bu_{m}\} \subset [\widetilde{\P_{\TT}}]_L\cap \Q\{\bu_{0},\ldots,\bu_{m}\}$, then
$\SR\in [\widetilde{\P_{\TT}}]_L\cap \Q\{\bu_{0},\ldots,\bu_{m}\}$.
It contradicts to the fact that $\SR'$ has the lowest rank in
$[\widetilde{{\P_{\TT}}}]_L\cap \Q\{\bu_0,\ldots,\bu_m\}$ for each $u_{i0}$.} Then by the uniqueness of sparse difference resultant,
$\SR = \SR'=\widetilde{\SR} $ up to a multiplicative constant. }
\end{proof}

{\color{black}Theorem \ref{sr=wsr} provides a simple way to compute sparse difference resultants of a Laurent transformally essential system where the unique super essential system is further simplified.}
Now we can give some new order bounds of the sparse difference resultant $\SR$ which is obviously true according to Theorem~\ref{th-ord-jacobi2} and \ref{sr=wsr}.
Let $A_{\P_{\TT}}$ be the order matrix of the system $\P_{\TT}$ and $A_{\widetilde{\P_{\TT}}}$ the order matrix of
$\widetilde{\P_{\TT}}$ such that the $(m+1)\times m$ matrix $M_{\widetilde{\P_{\TT}}}$ is of rank $m$.
Then, an order bound of $\SR$ is just the order bound of $\SR'$, and by Theorem~\ref{th-ord-jacobi2}, we have

\begin{prop}
The order bound of $\SR$ for each set $\bu_i, 0\le i\le m$  can be bounded by $\Jac( A_{\widetilde{\P_{\TT}}})$,
where $\widetilde{\P_{\TT}}$ forms a Laurent transformally essential system, or equivalently,
the $(m+1)\times m$ matrix $M_{\widetilde{\P_{\TT}}}$ is of rank $m$.
\end{prop}

By above proposition, one can get the order bound by the minimal Jacobi number of the corresponding $m\times m$ full rank sub-matrices of $A_{\P_{\TT}}$.
Furthermore, we have
\begin{prop}\label{prop-jacobi}
Let $M_{\widetilde{\P_{\TT}}}$ be an $(m+1)\times m$ full rank sub-matrix of $M_{\P_{\TT}}$. Let $f_i$ be the greatest common divisor
of the $i$-th column of $M_{\widetilde{\P_{\TT}}}$ for $i=1,\ldots m$. Then the order bound of $\SR$ for each set $\bu_i$
can be bounded by $J_i = \Jac\big( (A_{\widetilde{\P_{\TT}}})_{\hat{i}}\big) - \sum_{i=1}^m{\deg(f_i)}$.
\end{prop}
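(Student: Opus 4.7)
The plan is to reduce to the preceding proposition by constructing an auxiliary system $\P'_\TT$ whose symbolic support matrix has trivial column GCDs, and to apply Theorem~\ref{th-ord-jacobi2} to it. By Theorem~\ref{sr=wsr}, I may assume $\SR = \widetilde{\SR}$, the sparse difference resultant of the super-essential system $\widetilde{\P_\TT}$ on variables $y_1, \ldots, y_m$.

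Since $f_i(x)$ divides every entry in the $i$-th column of $M_{\widetilde{\P_\TT}}$, the shift-exponent vectors of $y_i$ across all monomials in $\widetilde{\P_\TT}$ lie in the $\Z[x]$-submodule $f_i(x)\cdot\Z[x]$. I would construct $\P'_\TT$ by replacing $y_i$ with a new difference variable $z_i$ via a unimodular transformation in this $\Z[x]$-module, analogous to Lemma~\ref{lm-algess}. In the common case $f_i(x) = x^{d_i}$, this amounts simply to the shift $z_i = y_i^{(d_i)}$, which lowers $\ord(\widetilde{\P_j}, y_i)$ uniformly by $d_i$ across all $j$. After the transformation, the $i$-th column of the new symbolic support matrix equals the original divided by $f_i(x)$, so the new order matrix $A'$ satisfies $A'_{ji} = (A_{\widetilde{\P_\TT}})_{ji} - \deg(f_i)$. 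Because the coefficient vectors $\bu_j$ are unchanged and the transformation is $\Z[x]$-invertible, the sparse difference resultants of $\P'_\TT$ and $\widetilde{\P_\TT}$ coincide up to sign.

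Since $(A')_{\hat i}$ is $m\times m$, every diagonal sum uses each of the $m$ columns exactly once, so
$$\Jac((A')_{\hat i}) = \Jac((A_{\widetilde{\P_\TT}})_{\hat i}) - \sum_{k=1}^m \deg(f_k).$$
Applying Theorem~\ref{th-ord-jacobi2} to $\P'_\TT$ then yields $\ord(\SR, \bu_i) \leq \Jac((A')_{\hat i})$, which is the claimed bound.

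The main obstacle is rigorously constructing the change of variables for a general $f_i(x)$: the ``division by $f_i$'' is a $\Z[x]$-module operation that is not in general realizable as a monomial substitution on $y_i$ alone (for example, when $f_i$ has roots other than $0$, one must re-aggregate several shifts of $y_i$ into a new variable). Handling this requires extending Lemma~\ref{lm-algess} from the $\Z$-lattice setting to the $\Z[x]$-module setting, and carefully verifying that the resulting Laurent difference system is well-defined with the same sparse difference resultant as $\widetilde{\P_\TT}$.
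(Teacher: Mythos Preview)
Your overall plan---build an auxiliary system whose symbolic support matrix has $i$-th column $\c_i/f_i$, then apply Theorem~\ref{th-ord-jacobi2} and use that each diagonal sum of the $m\times m$ order matrix picks up every column exactly once---is precisely the paper's strategy, and your Jacobi-number arithmetic at the end is correct. The gap is exactly the one you flag: for general $f_i\in\Z[x]$ the ``division by $f_i$'' is not realised by any monomial change of variables on the Laurent difference ring, so the claim that the transformation is $\Z[x]$-invertible and hence preserves $\SR$ is not justified. Concretely, the induced map goes from the full Laurent ring in $z$ into a proper difference subring of the Laurent ring in $y$; one inclusion of the elimination ideals follows easily, but the reverse inclusion does not, because a membership witness $M\cdot r=\sum a_{jk}\sigma^{k}\norm(\widetilde{\P_j})$ may involve coefficients $a_{jk}$ outside that subring.

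The paper avoids the $\Z[x]$-module machinery entirely with an add-then-remove trick that uses only Theorem~\ref{sr=wsr}. For a fixed column $\c_1$, \emph{adjoin} a new variable $y_{m+1}$ whose symbolic column is $\c_1'=\c_1/f_1$ (this is legitimate because $f_1$ divides every individual support polynomial $(\beta_{jk})_1$), obtaining a system $\widehat{\P_\TT}$ with $(m{+}1)\times(m{+}1)$ symbolic support matrix $[\c_1,\ldots,\c_m,\c_1']$; since $\c_1'$ is a $\Q(x)$-multiple of $\c_1$, the rank is still $m$, and setting $y_{m+1}=1$ recovers $\widetilde{\P_\TT}$, so by Theorem~\ref{sr=wsr} their sparse difference resultants agree. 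Now \emph{delete} the original column by setting $y_1=1$: the remaining columns $[\c_2,\ldots,\c_m,\c_1']$ still have rank $m$, so Theorem~\ref{sr=wsr} applies again and the resultant is unchanged. Iterating over all $i$ replaces every $\c_i$ by $\c_i/f_i$. This two-step passage through an intermediate system with one extra (redundant) variable is the missing ingredient; it lets you invoke an already-proved specialisation theorem twice instead of constructing and justifying a new class of difference-variable transformations.
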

\begin{proof}
Let $\widetilde{ \P_{\TT} }$ be a super essential system and  $M_{\widetilde{\P_{\TT}}}$ its symbolic support matrix.
We denote by $\c_i$ the $i$-th column of $M_{\widetilde{\P_{\TT}}}$ and $\c_i' = \c_i/f_i$.
Then $\c_i$ and $\c_i'$ are linearly dependent.
 Let  $M_{\widehat{\P_{\TT}}}$ be the
$( m+1) \times ( m+1)$ matrix by adding a last column $\c_1'$ to
$M_{\widetilde{\P_{\TT}}}$ and $\widehat{\P_{\TT}}$ its corresponding difference system
{\color{black}by introducing a new difference indeterminant}.
Then by Theorem~\ref{sr=wsr},
$\widehat{\P_{\TT}}$ and $\widetilde{ \P_{\TT} }$ have the same sparse difference resultant. Let
 $M_{\widetilde{ \P_{\TT} }'}$ be the sub-matrix of $M_{{\widehat{\P_{\TT}}}}$ by deleting the first column and
 $\widetilde{ \P_{\TT} }'$ its corresponding difference system. By Theorem~\ref{sr=wsr} again,
$\widetilde{ \P_{\TT} }'$  and $\widetilde{ \P_{\TT} }$ have the same sparse difference resultant.
Inductively, one may take an $(m+1)\times m$ matrix $M_{\overline{\P_\TT}}$ with $\c_i'$ as its $i$-th column and its corresponding
difference system $\overline{\P_\TT}$, such that $\overline{\P_\TT}$ and $\P_\TT$ have the same sparse difference resultant. Thus the order bound of $\SR$ for each set $\bu_i$  equals the $i$-th Jacobi number of the order matrix $A_{\overline{\P_\TT}}$, which is $\widetilde{J_i}- \sum_{i=1}^m{\deg(f_i)}$
where $\widetilde{J_i} = \Jac( (A_{\widetilde{\P_{\TT}}})_{\hat{i}})$.
\end{proof}

In what follows, we state a proposition which will accelerate the algorithm to search for a simple algebraic polynomial system induced by $\widetilde{\P_\TT}$.
\begin{prop}\label{prop-1}
Let $\mathbb{\widehat{P}}=\{\P_0^{[J_0]},\P_1^{[J_1]},\ldots,\P_m^{[J_m]}\}$
be a polynomial system obtained from a transformally essential difference system,
$\mathcal{D}_{\mathbb{\widehat{P}}}$ the algebraic symbolic support matrix of $\mathbb{\widehat{P}}$,
Jacobi numbers $J_i$ 
be given as above. Then its algebraic essential
system is contained in
$\{\P_0^{[J_0-p]},\P_1^{[J_1-p]},$ $\dots, \P_m^{[J_m-p]}\}$, where
$p=|\mathbb{\widehat{P}}|-\mbox{rank}(\mathcal{D}_{\mathbb{\widehat{P}}})-1$.
\end{prop}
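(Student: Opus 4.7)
The plan is to translate the statement into a dimension count for the left null space of $\mathcal{D}_{\mathbb{\widehat{P}}}$, read off from the unique primitive linear dependence among the rows of the difference symbolic support matrix $M_{\P}$. First I would exploit the super-essentiality of $\{P_0,\ldots,P_m\}$ (ensured by Lemma~\ref{wpt-ess} together with Theorem~\ref{th-ex}): the rows $w_0,\ldots,w_m$ of $M_\P$ have entries in $\Z[\bu][x]$ and $\rank(M_\P)=m$ over $\Q(\bu)(x)$. Clearing denominators and stripping common factors therefore yields a primitive relation
\[
\sum_{i=0}^m g_i(x)\,w_i \;=\; 0,\qquad g_i\in\Q(\bu)[x],\quad \gcd(g_0,\ldots,g_m)=1,
\]
unique up to a nonzero scalar in $\Q(\bu)$.

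Next I would bridge to the algebraic side. Since shifting $P_i$ by $\delta^k$ multiplies its symbolic support vector by $x^k$, every $\Q(\bu)$-linear dependence among the rows of $\mathcal{D}_{\mathbb{\widehat{P}}}$ rewrites as $\sum_{i=0}^m A_i(x)\,w_i=0$ with $A_i(x)=\sum_{k=0}^{J_i}a_{ik}x^k$ of degree at most $J_i$. Uniqueness of the primitive and coprimality of the $g_i$ in the PID $\Q(\bu)[x]$ then force $A_i(x)=\lambda(x)\,g_i(x)$ for a single $\lambda\in\Q(\bu)[x]$.

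Then a dimension count closes the argument. The admissibility conditions $\deg(\lambda g_i)\le J_i$ collapse to $\deg\lambda\le d := \min_i(J_i-\deg g_i)$, so admissible $\lambda$'s form a $\Q(\bu)$-vector space of dimension $d+1$. On the other hand, the left null space of $\mathcal{D}_{\mathbb{\widehat{P}}}$ has dimension $|\mathbb{\widehat{P}}|-\rank(\mathcal{D}_{\mathbb{\widehat{P}}})=p+1$. Equating forces $d=p$, hence $\deg g_i\le J_i-p$ for every $i$. The algebraic essential subsystem is the one carried by the minimal-support dependence, namely $(g_0,\ldots,g_m)$ itself (the case $\lambda=1$), so its intersection with $P_i^{[J_i]}$ consists of those $\delta^k P_i$ with $k$ in the support of $g_i$, hence with $k\le\deg g_i\le J_i-p$. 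This yields the claimed inclusion.

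The main obstacle I expect is verifying the exact bijection between admissible $\lambda$'s and the left null space: one has to show $g_0,\ldots,g_m$ are coprime in $\Q(\bu)[x]$, and not merely in $\Q(\bu)(x)$, so that $A_i=\lambda g_i\in\Q(\bu)[x]$ truly forces $\lambda\in\Q(\bu)[x]$; and one has to confirm that the minimal-support dependence indeed picks out the essential (rather than merely weakly essential) subsystem. Both points rely on the super-essentiality of $\{P_0,\ldots,P_m\}$ furnished by Lemma~\ref{wpt-ess}.
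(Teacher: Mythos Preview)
Your argument is correct, but it is genuinely different from the paper's. The paper proves the proposition by contradiction and by appealing to the Jacobi order bound (Theorem~\ref{th-ord-jacobi2}): it assumes that $\mathcal{D}_{\overline{\P}}$ (for $\overline{\P}=\{P_i^{[J_i-p]}\}$) has full row rank, infers that $\ord(\SR,\bu_i)>J_i-p$ for some $i$, and then compares $\overline{\P}$ with $\widehat{\P}$ to force the co-rank of $\mathcal{D}_{\widehat{\P}}$ to be at most $p$, contradicting the definition of $p$. Your route is a direct structural argument: you parametrize the entire left null space of $\mathcal{D}_{\widehat{\P}}$ over $\Q(\bu)$ by polynomials $\lambda\in\Q(\bu)[x]$ via the unique primitive syzygy $\sum_i g_i(x)w_i=0$, and then read off $\deg g_i\le J_i-p$ from the dimension identity $d+1=p+1$.

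What each approach buys: the paper's proof is short and leans on the already-established order bound as a black box, but as written it is somewhat sketchy (the construction of $\P'$ and the passage from ``$\overline{\P}$ has full row rank'' to ``$\ord(\SR,\bu_i)>J_i-p$'' are not spelled out). Your argument is more transparent and actually yields more: it identifies the minimal-ranking essential subsystem explicitly as the support of the primitive syzygy $(g_0,\ldots,g_m)$ and gives the \emph{equality} $\min_i(J_i-\deg g_i)=p$, not merely an inclusion. The obstacle you flag in Step~7 is real but easily dispatched: since every null-space element is $\lambda\cdot(g_i)$ with $\lambda\in\Q(\bu)[x]$, any dependent subset supporting such a relation has, for each $i$, top index $\deg\lambda+\deg g_i$ and bottom index $\mathrm{lowdeg}\,\lambda+\mathrm{lowdeg}\,g_i$; hence the only \emph{minimal} dependent subsets are the shifts of $\mathrm{supp}(g_i)$ by monomial $\lambda$, and the constant-$\lambda$ shift is the one of lowest ranking. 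The coprimality of the $g_i$ in the PID $\Q(\bu)[x]$ follows exactly as you say, and each $g_i\ne 0$ because $\P_\TT$ is super-essential.
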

\begin{proof}
Let $\mathbb{\overline{P}} = \{\P_0^{[J_0-p]},\P_1^{[J_1-p]},\dots,\P_m^{[J_m-p]}\}$ and
$\mathcal{D}_{\mathbb{\overline{P}}}$ the algebraic symbolic support matrix
of $\mathbb{\overline{P}}$. We only need to show that $\mathcal{D}_{\mathbb{\overline{P}}}$ is not of full row rank.
If it is not the case, let $\SR$ be the sparse difference resultant of the original system, then there exist
$Q_P, T_{ij}\in \F\{\Y,\Y^{-1}\}$ such that
$\SR = \sum_{P\in \mathbb{\overline{P}}} Q_P P + \sum_{i=0}^m \sum_{j=1}^{k_i}T_{ij}\sigma^{J_i-p+j} \P_i$.
We claim that there exists an $i$, such that $\ord(\SR,\bu_i) > J_i - p $. If this is not the case,
 substituting $ \sigma^{J_i-p+j} u_{i0} $ by $ \sigma^{J_i-p+j} (-\sum_{k=1}^{l_i}u_{ik}
M_{ik}/M_{i0}) $ in both sides of the above equation,
 we have  $\SR = \sum_{P\in \mathbb{\overline{P}}} \overline{Q}_P P$ for some $\overline{Q}_P \in \F\{\Y,\Y^{-1}\}$,
 which contradicts to the fact that $\mathcal{D}_{\mathbb{\overline{P}}}$ is of full row rank.
Hence the claim is true.
{Then, the algebraic symbolic support matrix of
$\mathbb{P'} = \{\P_0^{[J_0]},\P_1^{[J_1]},\ldots, \P_{i-1}^{[J_{i-1}]}, \P_i^{[J_i-p]}, \P_{i+1}^{[J_{i+1}]},\ldots, \P_m^{[J_m]}\}$
is of full rank by Theorem~\ref{th-ord-jacobi2}.}
Comparing $\mathbb{P'}$ with $\mathbb{\widehat{P}}$, we find that the co-rank of $\mathcal{D}_{\mathbb{\widehat{P}}}$
is no more than $p$ which contradicts to the definition of $p$.
\end{proof}

\subsection{{A new algorithm for sparse difference resultant}}
{ Based on the new bounds and propositions for the sparse difference resultant,
we propose an improved algorithm to compute sparse
difference resultant for any given Laurent transformally essential difference polynomial system. The
algorithm is motivated by the fact that sparse difference resultant
is equal to the sparse resultant of a strong essential
polynomial system which is derived from the original difference
polynomial system~\citep{gao-2015}. Thus one can transform the obtainment of sparse
difference resultant to compute sparse resultant which has
mature algorithms such as subdivision method initiated by Canny and
Emiris \citep{emiris}.}

Therefore, the algorithm is divided into two parts. The first part
is to find the strong essential polynomial system. The main strategy
for this part is to use the symbolic support matrix of the given
difference polynomial system to determine the existence of sparse
difference resultant.
{\color{black} If yes, obtain the unique super-essential system,
and simplify the super-essential system based on Theorem \ref{sr=wsr},
then use algebraic tools to find the strong essential system.}
 The second
one is to use the mixed subdivision method to construct matrix
representation of sparse resultant of the strong essential system
{which provides the required sparse difference resultant up to a sign.}

In order to present a better understanding of the whole procedure, we give the flow chart of \textbf{Algorithm 1} in Figure 1.\\


\begin{minipage}{15cm}

\[\begin{CD}
\framebox[12em]{A difference system $\mathbb{P}$}
@.\framebox[15em]{Sparse difference resultant of $\mathbb{P}$}
\\@V\text{rank}(D_{\mathbb{P}})=n VV  @A\text{Algebraic sparse} A \text{resultant of
$\mathcal {\widehat{P}}$ }A\\
\framebox[12em]{\parbox{10em}{Laurent transformally essential system
$\mathbb{P}_{I}$}}@.\framebox[15em]{Simplification of $\mathcal {P}$ to obtain $\mathcal {\widehat{P}}$}\\
@V \text{rk}(D_{\mathbb{P}_I}) =
|I|-1 V J\subseteq I,|J|= \text{rk}(D_{\mathbb{P}_J})V  @A{\scriptsize\text{Assignments}}A
\text{variable transformation}A \\
\framebox[12em]{Super-essential system $\mathbb{P}_{\TT}$
}@. \framebox[15em]{\parbox{14em}{Find algebraic essential system
with minimal ranking to obtain $\mathcal {P}$}}\\
@V \text{Set variables} V \{y_{m+1},\dots,y_n\} \text{to one} V  @AAA\\
 %
\framebox[12em]{\parbox{10em}{A new super-essential difference system} }
@>>> \framebox[15em]{\parbox{14em}{{Algebraic system $\mathbb{\widehat{P}}$
 obtained by order bounds}}}\\
 \end{CD}\]

 \begin{center}
 Figure 1. Flow chart of the algorithm
\end{center}
\end{minipage}

\begin{algorithm}[H]\label{alg-dresl}
  \caption{\bf ----- SDResultant($\P$)} \smallskip
  \Inp{A generic {Laurent} difference system $\P=\{\P_0,\ldots,\P_n\}$.}\\
  \Outp{The sparse difference resultant $\SR$ of $\P$.}\medskip
{
  \noindent

  1. Construct the symbolic support matrix $D_{\P}$ of $\P$.\\
      \SPC If rank($D_{\P})= n$, then proceed to compute SDResultant;\\
      \SPC Else, return ``No SDResultant for $\P$ ''. \\
  2. Set $\TT = \{0,1,\dots,n\}, S = \emptyset$.\\
  3. Let $S = \emptyset$, choose an element $i\in\TT$.\\
   \SPC 3.1. Let $S = S\cup\{ i \}, \TT' = \TT\setminus S, $  \\
   \SPC 3.2  If rank($D_{\P_{\TT'}})= |\TT'|-1$, set $\TT=\TT'$, return to step 3. \\
   \SPC\SPC 3.2.1     Else if $\TT= S$, go to the next step, \\
   \SPC\SPC 3.2.2    else choose an $i\in \TT\setminus S$ go back to step 3.1.\\
     \SPC Note $\P_{\TT}$ is a super-essential system.\\
  4. Assume $\TT=\{0,1,\dots,m\}$.
 {\color{black}
 Compute the rank of  the symbolic support matrix $D_{\P_\TT}$ of $\P_\TT$ by Gauss elimination.
 We obtain $(i_1,\ldots,i_m)$ such that
  the {\color{black}$(m+1)\times m$} matrix which corresponds to the
 $i_1$-th, $\ldots, i_m$-th columns of $D_{\P_\TT}$ has rank $m$.
  We assume these columns are the first $m$ columns.}
%
  {\color{black}
  Set the variables $\{y_{m+1},\dots,y_n\}$ in $\P_{\TT}$ to 1,
   we denoted by $\widetilde{\P_{\TT}}$ the new system under this substitution. }\\
  \SPC Compute the order matrix $A_{\widetilde{\P_{\TT}}}$ of $\widetilde{\P_{\TT}}$ and $f_i$ the common
  factor of the $i$-th column of $M_{\widetilde{\P_{\TT}}}$, compute the Jacobi number $J_i=\Jac\big( (A_{\widetilde{\P_{\TT}}})_{\hat{i}}\big) - \sum_{i=1}^m{\deg(f_i)}.$\\
  \SPC Construct a new algebraic system $\mathbb{\widehat{P}}=\{\widetilde{\P}_0^{[J_0]},\widetilde{\P}_1^{[J_1]},\dots,\widetilde{\P}_m^{[J_m]}\}$.\\
  5. Compute the algebraic symbolic support matrix $\mathcal{D}_{\mathbb{\widehat{P}}}$ of $\mathbb{\widehat{P}}$, let $p=|\mathbb{\widehat{P}}|-\mbox{rank}({D}_{\mathbb{\widehat{P}}})-1$.\\
 \SPC  Find the algebraic essential system with minimal ranking  $\mathcal {P}$ from $\widehat{P}_e=\{P_0^{[J_0-p]},P_1^{[J_1-p]},\dots,$ \SPC $P_m^{[J_m-p]}\}$. \\
 6. Select $n-\mbox{rank}({D}_{\mathcal {P}})$ variables in $\mathcal{P}$ to 1, denoted by $\mathcal{\widehat{P}}$.\\
\SPC Take a variable transformation for $\mathcal {\widehat{P}}$ to make it be a strong essential system. \\
  7. Use mixed subdivision algorithm to obtain sparse algebraic resultant of $\mathcal {\widehat{P}}$.\medskip
}

  \noindent /*/\;  In step 4, by {\color{black}Gaussian elimination,} we may obtain the row echelon form. The indices are corresponding to the columns
  of the pivots in the row echelon form.\\
  \noindent /*/\;  Steps $5\sim7$ are performed purely {\color{black}algebraically}.\smallskip
%
\end{algorithm}

{
\begin{theorem}\label{th-alg}
The algorithm is correct.
\end{theorem}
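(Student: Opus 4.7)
The plan is to verify the algorithm step by step, invoking each of the theoretical results already proved in the paper. First, I would establish that Step 1 is correct: by Theorem \ref{th-ex}, the rank of the symbolic support matrix $D_{\P}$ equals $n$ if and only if $\P$ is Laurent transformally essential, which is exactly the condition for $\SR$ to exist. Next, I would show Steps 2--3 produce a super-essential subsystem $\P_{\TT}$: the inner loop repeatedly discards indices $i$ for which the remaining subsystem still has symbolic support matrix of defect one, and by Theorem \ref{the-sup} this process terminates at the unique super-essential subset, whose coefficients are precisely the ones that appear in $\SR$.

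The heart of the correctness argument lies in Step 4. Here one selects $m$ columns of $M_{\P_{\TT}}$ whose rank equals $m = \rank(M_{\P_{\TT}})$ (which exists by definition of the rank), sets the remaining $y_i$ to $1$, and replaces $\P_{\TT}$ by $\widetilde{\P_{\TT}}$. I would invoke Theorem \ref{sr=wsr} to conclude that this substitution preserves the sparse difference resultant up to a sign. The computation of the order bounds $J_i = \Jac\bigl((A_{\widetilde{\P_{\TT}}})_{\hat{i}}\bigr) - \sum_{i=1}^m \deg(f_i)$ is then justified by Proposition \ref{prop-jacobi}, so that the prolonged algebraic system $\widehat{\P} = \{P_i^{[J_i]}\}_{i=0}^m$ is guaranteed to contain the algebraic essential system whose sparse resultant equals $\SR$.

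For Step 5, Proposition \ref{prop-1} gives that the algebraic essential system with minimal ranking lies inside the truncated system $\widehat{\P}_e = \{P_i^{[J_i-p]}\}$, where $p = |\widehat{\P}| - \rank(\mathcal{D}_{\widehat{\P}}) - 1$; so searching within $\widehat{\P}_e$ and picking the minimal-ranking essential subsystem $\mathcal{P}$ is both sound and exhaustive. For Step 6, I would apply Lemma \ref{lm-vess} to justify that setting $n - \rank(\mathcal{D}_\mathcal{P})$ of the remaining variables to $1$ yields a variable-essential system with the same sparse resultant, and then Lemma \ref{lm-algess} to justify that an invertible monomial change of variables yields a strong essential system $\widehat{\mathcal{P}}$ whose sparse resultant again agrees with $\SR$. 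Finally, in Step 7 the Canny--Emiris mixed-subdivision method \cite{emiris} is applied to $\widehat{\mathcal{P}}$; since $\widehat{\mathcal{P}}$ is strong essential, this method correctly returns its sparse algebraic resultant, which by the chain of equalities equals $\SR$ up to a sign.

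The main obstacle I anticipate is not any single step but rather \emph{threading} the invariant ``sparse resultant equals $\SR$ up to sign'' through the long chain of reductions: the passage $\P \leadsto \P_{\TT} \leadsto \widetilde{\P_{\TT}} \leadsto \widehat{\P} \leadsto \mathcal{P} \leadsto \widehat{\mathcal{P}}$. Each transition is covered by a cited result, but care must be taken that the hypotheses of each result apply to the system actually produced by the preceding step. In particular, after Step 4 one must verify that $\widetilde{\P_{\TT}}$ remains super-essential (Lemma \ref{wpt-ess}) so that Proposition \ref{prop-jacobi} is applicable, and after Step 5 one must verify that the extracted $\mathcal{P}$ is indeed algebraically essential so that Lemmas \ref{lm-vess} and \ref{lm-algess} can be applied. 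Once these compatibility checks are made, the correctness of the algorithm follows by assembling the chain of preservations.
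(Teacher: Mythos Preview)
Your proposal is correct and takes essentially the same approach as the paper: both argue correctness by invoking the chain of supporting results (Theorem~\ref{the-sup}, Lemma~\ref{wpt-ess}, Theorem~\ref{sr=wsr}, Proposition~\ref{prop-1}, Lemma~\ref{lm-vess}) that justify each step of the algorithm. Your version is in fact more detailed than the paper's, which simply lists the relevant lemmas and theorems without tracing the invariant through each transition; your additional citations of Theorem~\ref{th-ex}, Proposition~\ref{prop-jacobi}, and Lemma~\ref{lm-algess} are appropriate and fill in steps the paper leaves implicit.
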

\begin{proof}
The termination of the algorithm is obvious. The correctness of the algorithm is guaranteed by Lemma~\ref{lm-vess},
Theorem~\ref{the-sup}, Theorem~\ref{th-ord-jacobi2}, Lemma~\ref{wpt-ess}, Lemma~\ref{lm-apess}, Theorem~\ref{sr=wsr} and Proposition~\ref{prop-1}.
\end{proof}
}

\subsection{Complexity of the algorithm}
We divide the complexity analysis of \textbf{Algorithm 1} into two parts: the first six steps and the mixed
subdivision algorithm in step 7, and then combine these two parts to estimate the overall complexity.
In complexity bounds, we sometimes ignore polylogarithmic factors in the parameters appearing in
polynomial factors; this is denoted by $O^*(\cdot )$.

We first recall the complexity analysis of the mixed subdivision algorithm. The key point of mixed subdivision algorithm is to construct Newton matrix whose determinant is a nontrivial multiple of sparse resultant. We refer to \citep{canny-2000} and references therein for the related background materials and constructed techniques of Newton matrix.
The complexity of Newton matrix construction is analyzed in \citep{canny-2000} and recalled as follows.
\begin{theorem}\label{th-111}\citep[Theorem 11.6]{canny-2000}
Given polynomials $g_1,\dots, g_n$,  the algorithm computes an implicit representation of Newton matrix $M$ with worst-case bit complexity
$$O^*\left(|\mathscr{E}|n^{9.5}\mu^{6.5}\log^2d\,\log\frac{1}{\epsilon_l\epsilon_\sigma}\right)$$
where $|\mathscr{E}|$ is the cardinality of the set that indexes the rows and columns of Newton matrix $M$,  $\mu$ is the maximum point cardinality of the $n+1$ supports, $d$ is the maximum degree of any polynomial in any variable, and
$\epsilon_l,\epsilon_\sigma\in(0, 1)$  are  the  error probabilities for the lifting scheme and the perturbation, respectively.
\end{theorem}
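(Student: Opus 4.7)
The plan is to follow the Canny--Emiris construction, in which the Newton matrix is built row by row via a randomized mixed subdivision of the Minkowski sum of the Newton polytopes. I would first fix the lifting values and the perturbation vector. The lifting scheme assigns a random integer height to every point in each support, and the perturbation replaces every exponent vector by a small random displacement in order to guarantee that no degeneracies occur in the induced subdivision. A standard Schwartz--Zippel / DeMillo--Lipton argument controls the failure probabilities: the heights have to be drawn from an integer range whose size depends on $1/\epsilon_l$, and similarly for the perturbation range depending on $1/\epsilon_\delta$. This forces every coefficient we manipulate to be represented by $O(\log(1/(\epsilon_l\epsilon_\delta)))$ bits on top of the $O(\log(\mu d))$ bits needed to encode the original supports.

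Next I would count, row by row, the cost of identifying which row of $M$ corresponds to a given index $e\in\mathscr{E}$. For each $e$ we must (i) locate $e$ inside the mixed subdivision, and (ii) decide which lifted cell (and thus which polynomial and which translate) is responsible for $e$. This is solved by a linear program in $n+1$ variables involving the $(n+1)\mu$ lifted points; an interior point method solves an LP of this size in $\widetilde O(n^{3.5}\mu^{3.5})$ arithmetic operations, and each arithmetic operation on numbers of bit-length $O(\mu\log d+\log(1/(\epsilon_l\epsilon_\delta)))$ carries an extra polynomial factor in $n$ and $\mu$. Multiplying together the LP size, the bit-length of the lifted data, and the number of rows $|\mathscr{E}|$ gives the dominant term. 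The exponents $n^{9.5}$ and $\mu^{6.5}$ arise by combining the $n^{3.5}\mu^{3.5}$ LP cost with the cost of evaluating lifting inequalities (another factor polynomial in $n$ and $\mu$) and with the bit complexity of exact arithmetic in the lifted coordinates.

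The probability bookkeeping is then assembled by a union bound. For each of the $|\mathscr{E}|$ point-location queries and for each random choice, the probability of degeneracy is bounded by a polynomial in the parameters divided by the sampling range. Choosing the lifting and perturbation ranges proportional to $|\mathscr{E}|/\epsilon_l$ and $|\mathscr{E}|/\epsilon_\delta$ makes the overall failure probability at most $\epsilon_l+\epsilon_\delta$ while contributing only the additive $\log(1/(\epsilon_l\epsilon_\delta))$ factor to the bit length. The $\log^2 d$ factor comes from two sources: one $\log d$ factor bounds the bit length of an individual coordinate of a support point, and the other arises when fast integer arithmetic is invoked inside the LP solver on numbers of magnitude $d^{O(1)}$.

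The main obstacle, and the step I would treat most carefully, is the interaction between the LP-based point location and the bit-length blow-up induced by the random lifting. Naively, each lifted height can be as large as a polynomial in $|\mathscr{E}|/\epsilon_l$, and if these heights propagate through the LP solver without care the bit-complexity becomes worse than $\mu^{6.5}$. The trick is to exploit that the LP constraint matrix has a structured form (a Cayley-type matrix) so that the interior point iterations only need to operate on numbers whose sizes are controlled by the input supports rather than by the full magnitude of the lifted data; this is what keeps the $\mu$ exponent at $6.5$ rather than degenerating to a larger power. Once that structural lemma is in place, the claimed bound $O^*(|\mathscr{E}|n^{9.5}\mu^{6.5}\log^2 d\,\log(1/(\epsilon_l\epsilon_\delta)))$ follows by multiplying the per-row cost by $|\mathscr{E}|$.
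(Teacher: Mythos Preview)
The paper does not prove this theorem. It is stated explicitly as a recalled result from Canny and Emiris~\cite{canny-2000}: the sentence immediately preceding the theorem reads ``The complexity of Newton matrix construction is analyzed in~\cite{canny-2000} and recalled as follows,'' and no argument is supplied afterward. So there is no ``paper's own proof'' to compare your proposal against.

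Your sketch is a plausible outline of the Canny--Emiris analysis (randomized lifting, LP-based point location in the mixed subdivision, Schwartz--Zippel control of failure probabilities), but since the present paper simply quotes the bound as a black box, none of that reasoning appears here. If your goal is to supply a proof for this paper, the appropriate response is simply to cite~\cite{canny-2000}; if your goal is to reconstruct the Canny--Emiris argument itself, that would need to be checked against their original paper rather than this one.
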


Before we estimate the complexity of \textbf{Algorithm 1},
the following lemma is needed for the complexity analysis.
Before we estimate the complexity of \textbf{Algorithm 1},
the following lemma is needed for the complexity analysis.
{
\begin{lemma}\label{lm-rank}
For a symbolic matrix $M = (m_{i,j})$ with $p$ rows and $q$ columns, where $m_{i,j}\in \mathbb{C}[y_1,\ldots,y_k]$. The {\color{black} arithmetic} complexity
to compute the rank of $M$ with probability $1-\epsilon$ is bounded by $\max(p,q)^3$, where $\epsilon$ is the  error probabilities for the rank of the symbolic support matrices.
\end{lemma}}
\begin{proof}
{
Assume the rank of $M$ is $r$. Then, there exists an $r\times r$ sub-matrix $M_r$ of $M$,
such that $\rm{det}(M_r)$ $=g(y_1,\ldots, y_k) \ne 0$. Hence, we set $y_1,\ldots, y_k$ to concrete values
$a_1, \ldots, a_k$ in a given set $S$. {\color{black} By~Schwartz-Zippel Lemma\citep{zippel},  the probability of $g(a_1,\ldots,a_k) \ne 0$ is bounded by $1-d/|S|$, where $d$ is the degree of $g$.}
Now, let $\hat{M}$ be the matrix obtained by substituting $y_i$ by $a_i$ in $M$.
Then, $\rk(\hat{M}) \le \rk(M)$.
If  $g(a_1,\ldots,a_k) \ne 0$, the rank of $\hat{M}$ is $r$ and the time complexity
to compute the rank of $\hat{M}$ is bounded by $O(\max(p,q)^3)$. Denote by $\epsilon$  the  error probabilities for the rank computation for the symbolic support matrices which is depend on $S$, then the {\color{black} arithmetic} complexity
to compute the rank of $M$ with probability $1-\epsilon$ is bounded by $\max(p,q)^3$.}
\end{proof}
{
\begin{remark}
{\color{black} Here, we assume that the constant $a_i$ are taken randomly in a given set $S$, hence the error probabilities for the rank computation depends on the size of the set $S$, the size of the input matrix and the degree of the entries of the matrix. }%
To avoid the probability, one can derive deterministic bounds by using, for instance, test-sets of points\citep{recio-2018}.
\end{remark}
}

Now we give  the complexity of \textbf{Algorithm 1}.

\begin{theorem}
The total complexity of \textbf{Algorithm 1} is bounded by { \color{black}
$$O^*\left(|\mathscr{E}|n^{25.5} s^{16}\mu^{6.5}(\log^2d)\, \epsilon \right), $$ 
} where $|\mathscr{E}|$ is the cardinality of the set that indexes the rows and columns of Newton matrix
 $M$ corresponds to $\mathcal {\widehat{P}}$,
 $\mu$ is the maximum point cardinality of the $n+1$ supports w.r.t. $\mathbb{P}$,
 $d$ is the degree bound  of any polynomial in any variable, and
{\color{black} $\epsilon$ is the error probability of the Algorithm.}
\end{theorem}

\begin{proof}
We analyze the computational complexity for each step of \textbf{Algorithm 1}.

Step 1.
The first step is to compute the rank of symbolic support matrix $D_{\P}$ whose size is $(n+1)\times n$.
By Lemma~\ref{lm-rank}, {\color{black} the arithmetic complexity is bounded by $O^*(n^3)$ with error probability $\epsilon_1$}.

Step 2.
The complexity of this step can be ignored.

Step 3.
 This step is to find the super essential system of $\P$.
 Consider the worst case which means the super essential system only contains two polynomials.
 Thus we need $(n-1)$ loops from $(n+1)$ to 2. In the $k$-th loop, one need to compute the rank of a
 matrix with size $(n+1-k)\times n$. 
 Then, by Lemma~\ref{lm-rank}, the {\color{black} arithmetic} computational cost of this step is
$\sum_{k=0}^{n-1}[(n+1-k)^2 n]=O(n^3)$ {\color{black} with error probability $\epsilon_2$}.

Step $4$.
By \citep{moenck-1973}, it needs at most $O( d(\log d)^2)$ steps to compute the gcd of two polynomials in $\mathbb{Z}[x]$,
so the {\color{black} arithmetic} complexity of computing the gcd of $F \in \mathbb{Z}[x]$ is bounded by $(m-1)d(\log d)^2$.
By \citep{jacobi-number}, using Jacobi's algorithm, the {\color{black} arithmetic} complexity of Jacobi number is bounded by $O(n^3)$.

Step 5. The fifth step is to look for the algebraic essential system with minimal ranking  $\mathcal {P}$,
which is similar to the third step except for the big size algebraic symbolic support matrix. And one need to compute the rank of $\mathcal{D}_{\mathbb{\widehat{P}}}$ and the sub-matrix of the symbolic matrix of $\widehat{P}_e$.

We consider the worst case. Denote by $s=\max_i(ord(\P_i))$ the maximal order of $\P_i$.
The worst case is the Jacobi number $J_i = ms$ and $p=0$, then the number of polynomials in the
set $\widehat{P}_e$ is bounded by $(ms+1)(m+1)$. Then the symbolic support matrix of $\widehat{P}_e$ has the
size bounded by $(ms+1)(m+1)\times ((ms+1)(m+1)-1)$,
 and thus the {\color{black} arithmetic} complexity of computing the rank is bounded by $O(m^6s^3)$~{\color{black} with error probability $\epsilon_3$} for each loop by Lemma~\ref{lm-rank}.
In analogy with Step 3, the {\color{black} arithmetic} computational time in this step is bounded by  $O(m^8s^4)$~{\color{black} with error probability $\epsilon_4$}. Hence,
 the {\color{black} arithmetic} complexity in this step is bounded by $O(n^8s^4)$~~{\color{black} with error probability $\epsilon_4$} since $m\le n$.

 Step 6.
 To select $n-\mbox{rank}({D}_{\mathcal {P}})$ variables in $\mathcal{P}$ to 1, we can determine these variables according the result obtained from Step 5. To take a variable transformation for $\mathcal {\widehat{P}}$ to make it be a strong essential system, we only need to compute the Smith normal form of the matrix formed by the support vector of the monomials in $\mathcal {\widehat{P}}$. The monomials are bounded by $(ns+1)\mu$, where $\mu$ is the maximum point cardinality of the $n+1$ supports w.r.t. $\mathbb{P}$, hence the matrix size are bounded by $(ns+1)\mu \times (ns+1)(n+1)$, the {\color{black} arithmetic} complexity of this step is bounded by $O(n^6s^4\mu^2(\log d)^{1+\theta} )$~\citep[Proposition 8.10]{Storjohann2000}, where $0< \theta <1$ is a small positive number.

 Step 7.
 Now, consider the degree and size of $\mathcal {\widehat{P}}$.
 The number of polynomials of $\mathcal {\widehat{P}}$ is bounded by $(ns+1)(n+1)$.
 { Comparing with the original system $\mathbb{P}$, the degree
 keep invariant and the maximum point cardinality of the supports is bounded by $(ns+1)\mu$. Hence, by Theorem \ref{th-111}, the {\color{black} arithmetic} complexity to compute the sparse resultant of $\mathcal {\widehat{P}}$
 is bounded by
 $O^*\left(|\mathscr{E}|n^{25.5} s^{16}\mu^{6.5}\log^2d\,\log\frac{1}{\epsilon_l\epsilon_\sigma}\right)$,}
 where $|\mathscr{E}|$ is the cardinality of the set that indexes the rows and columns of Newton matrix
 $M$ corresponds to $\mathcal {\widehat{P}}$,
 $\mu$ is the maximum point cardinality of the $n+1$ supports w.r.t. $\mathbb{P}$,
 $d$ is the maximum degree of any polynomial $\mathbb{P}_i$ in any variable, and
$\epsilon_l,\epsilon_\sigma\in(0, 1)$  are  the  error probabilities for the lifting scheme and the perturbation, respectively.

{\color{black} Let $\epsilon$ be the total error probability for the whole Algorithm.}
Summarizing the above complexity analysis yields the total complexity. The proof ends.
\end{proof}

\subsection{Implementation and Examples}
{\color{black}\textbf{Algorithm 1} for finding the sparse difference resultant described above
has been implemented in the computer algebra systems {\color{black}Mathematica and Maple.}
The complied function \textbf{SDResultant} with Mathematica outputs the strong essential polynomial system, and then calling mixed subdivision method with Maple in \citep{emiris} gives the required sparse difference resultant  which corresponds to the sparse resultant of the obtained strong essential polynomial system. The interface of  \textbf{SDResultant} only need two arguments: the difference polynomial system and the difference indeterminates.
The function \textbf{SDResultant} will automatically check whether the input difference polynomial system
is Laurent transformally essential or not. If yes, the function \textbf{SDResultant} returns the strong essential polynomial system.}

There exist two obvious merits for \textbf{Algorithm 1}. One merit of \textbf{SDResultant} is that, by characterizing the difference polynomials with the
corresponding symbolic support matrix, it only requires the techniques of linear algebra, such as computing the rank of matrices, row reduction and so on, to discriminate the related conditions and finally output the strong essential polynomial system. Another one is that the algorithm finally gives the matrix representation of sparse difference resultant
which may facilitate to show the properties and explore fast algorithms for sparse difference resultant.


 \subsubsection{An artificial example}
{ In this section, we illustrate the  \textbf{Algorithm 1} by an artificial difference polynomial system} $\mathbb{P} =
\{\mathbb{P}_0,\mathbb{P}_1,\mathbb{P}_2, \mathbb{P}_3,\mathbb{P}_4\}$, where $y_{ij} = y^{(j)}_i$ and
\begin{eqnarray}
&&\no \hspace{-0.5cm}\mathbb{P}_0=u_{00}+u_{01}\, y_{11}^2 y_{21}^2 y_{31}+u_{02}\, y_{1}^2 y_{2} y_{3}y_4 y_{41},\\
&&\no\hspace{-0.5cm}\mathbb{P}_1=u_{10}+u_{11} \,y_{11}^2 y_{21}^2 y_{31}+u_{12}\, y_{11}^2 y_{21} y_{31} y_{41}y_{42}, \\
   &&\no \hspace{-0.5cm}\mathbb{P}_2=u_{20}+u_{21}\,y_{12}^2 y_{22}^2 y_{32}+u_{22} \,y_{11}^2 y_{21}^2 y_{31} +u_{23}\,y_{1}^2 y_{2} y_{3} y_4y_{41},\\
   &&\no \hspace{-0.5cm}\mathbb{P}_3=u_{30}+u_{31} \, y_{11}y_{21}+u_{32}\, y_{11}^2 y_{21} y_{31} y_{42},\\
   &&\no\hspace{-0.5cm}\mathbb{P}_4=u_{40}+u_{41}\, y_{11} y_{32} y_{41}+u_{42}\, y_{11}^2 y_{22} y_{4}.
\end{eqnarray}

\textbf{ 1. Concrete computations}

The first step is to check whether or not the difference system $\mathbb{P}$ is transformally essential.
The symbolic support matrix of $\mathbb{P}$ is
$$D_{\mathbb{P}}=\left(
\begin{array}{cccc}
 2 x u_{01}+2 u_{02} & 2 x u_{01}+u_{02} & x u_{01}+u_{02} & (x+1) u_{02} \\
 2 x u_{11}+2 x u_{12} & 2 x u_{11}+x u_{12} & x u_{11}+x u_{12} & (x^2+x) u_{12} \\
 2 u_{21} x^2+2 u_{22} x+2 u_{23} & 2 u_{21} x^2+2 u_{22} x+u_{23} & u_{21} x^2+u_{22}
   x+u_{23} & (x+1) u_{23} \\
 x u_{31}+2 x u_{32} & x u_{31}+x u_{32} & x u_{32} & x^2 u_{32} \\
 x u_{41}+2 x u_{42} & x^2 u_{42} & x^2 u_{41} & x u_{41}+u_{42} \\
\end{array}
\right).
$$
%
%
It is easy to find $\rk(D_{\mathbb{P}})=4$, thus $\mathbb{P}$ is transformally essential.

By the third step of the  \textbf{Algorithm 1}, the super essential system of $\mathbb{P}$ is $\mathbb{P}_{\TT}=\{\mathbb{P}_0,\mathbb{P}_1,$ $\mathbb{P}_2\}$ with $\TT=\{0,1,2\}$, which is independent of $\mathbb{P}_3$ and $\mathbb{P}_4$.
The symbolic support matrix of $\mathbb{P}_{\TT}$ is
$$D_{\mathbb{P}_{\TT}}=\left(
\begin{array}{cccc}
 2 x u_{01}+2 u_{02} & 2 x u_{01}+u_{02} & x u_{01}+u_{02} & (x+1) u_{02} \\
 2 x u_{11}+2 x u_{12} & 2 x u_{11}+x u_{12} & x u_{11}+x u_{12} & (x^2+x) u_{12} \\
 2 u_{21} x^2+2 u_{22} x+2 u_{23} & 2 u_{21} x^2+2 u_{22} x+u_{23} & u_{21} x^2+u_{22}
   x+u_{23} & (x+1) u_{23} \\
\end{array}
\right).$$

Since the submatix $M$ of $D_{\mathbb{P}_{\TT}}$ by deleting the middle two columns is
$$A_{14}=\left(
\begin{array}{cc}
 2 x u_{01}+2 u_{02} & (x+1) u_{02} \\
 2 x u_{11}+2 x u_{12} & (x^2+x) u_{12} \\
 2 u_{21} x^2+2 u_{22} x+2 u_{23} & (x+1) u_{23} \\
\end{array}
\right),$$
whose rank is 2, then by Theorem \ref{sr=wsr}, we set $y_2$ and $y_3$ and their transformations to 1, then $\widetilde{\mathbb{P}_{\TT}}=\{\widetilde{\mathbb{P}_0},\widetilde{\mathbb{P}_1},\widetilde{\mathbb{P}_2}\}$, where
\begin{eqnarray}
&&\no \widetilde{\mathbb{P}}_0=u_{00}+u_{01} \,y_{11}^2 +u_{02} \,y_{1}^2\, y_4\,y_{41},\\
&&\no  \widetilde{\mathbb{P}}_1=u_{10}+u_{11} \,y_{11}^2+u_{12} \,y_{11}^2 \,y_{41}\,y_{42},\\
&&\no  \widetilde{\mathbb{P}}_2=u_{20}+u_{21} \,y_{12}^2 +u_{22}\,y_{11}^2
  +u_{23} \,y_{1}^2\, y_4\,y_{41}.
\end{eqnarray}

Note that by Theorem \ref{sr=wsr}, one can delete any two columns to find the submatrix with rank 2. For example, by deleting the last two columns of $D_{\mathbb{P}_{\TT}}$, one also obtain the submatrix $A_{12}$. The rank of $A_{12}$ is 2 and one can set $y_3$ and $y_4$ and their differences to 1 to get the new simplified super essential difference system.

The order matrix of $\widetilde{\mathbb{P}_{\TT}}$ is
$\left(
\begin{array}{cc}
 1 & 1 \\
 1 & 2 \\
 2 & 1 \\
\end{array}
\right)$,
then the Jacobi numbers are $J_{\hat{0}}=4,J_{\hat{1}}=J_{\hat{2}}=3$. Since the last column   of $A_{14}$ has a common factor $(x+1)$, thus by Proposition \ref{prop-jacobi}, the modified Jacobi numbers are $\widetilde{J}_{\hat{0}}=J_{\hat{0}}-1=3,\widetilde{J}_{\hat{1}}=J_{\hat{1}}-1=2,\widetilde{J}_{\hat{2}}=J_{\hat{2}}-1=2$. Then we use the modified Jacobi numbers $\widetilde{J}_{\hat{i}}\,(i=0,1,2)$ to construct an { algebraic} system $\widehat{\mathbb{P}}=\{\widetilde{\mathbb{P}}_0^{[\widetilde{J}_{\hat{0}}]},
\widetilde{\mathbb{P}}_1^{[\widetilde{J}_{\hat{1}}]},\widetilde{\mathbb{P}}_2^{[\widetilde{J}_{\hat{2}}]}\}$.

The following steps are performed in algebraic circumstance. For the system $\widehat{\mathbb{P}}$, we first search for an essential system with minimal ranking, and then perform a variable transformation for the essential system, i.e.,
\begin{eqnarray}\label{exmp-1}
&&\no\hspace{-0.65cm} \mathcal{\widehat{P}}=\{\widetilde{\mathbb{P}}_0,\sigma\widetilde{\mathbb{P}}_0,\sigma^2\widetilde{\mathbb{P}}_0,\widetilde{\mathbb{P}}_1,
\sigma\widetilde{\mathbb{P}}_1, \widetilde{\mathbb{P}}_2,\sigma\widetilde{\mathbb{P}}_2\}\\
&&\no \hspace{-0.3cm}=\big\{u_{00}+z_4 u_{01}+z_1 u_{02},\sigma u_{00}+z_5 \sigma u_{01}+z_2  \sigma u_{02},\sigma^2u_{00}+z_6 \sigma^2u_{01}+z_3 \sigma^2u_{02},\\
  &&\no \hspace{0.3cm}u_{10}+z_4 u_{11}+z_2 u_{12},\sigma u_{10}+z_5 \sigma u_{11}+z_3 \sigma u_{12},\\
   &&\hspace{0.3cm}u_{20}+z_5 u_{21}+z_4 u_{22}+z_1
   u_{23},\sigma u_{20}+z_6 \sigma u_{21}+z_5 \sigma u_{22}+z_2
    \sigma u_{23}\big\},
\end{eqnarray}
where {$z_1=y_{1}^2 y_{41},z_2=y_{11}^2 y_{42}y_{41},z_3=y_{12}^2 y_{42}, z_4=y_{11}^2, z_5=y_{12}^2, z_6=y_{13}^2$}.

Finally, regard
$\mathcal{\widehat{P}}$ as the algebraic polynomial system of
$z_1,\dots,z_6$ and use the mixed subdivision algorithm in \citep{emiris} to obtain the sparse algebraic resultant of $\widehat{\mathbb{P}}$ which is the required sparse difference resultant $R$. This step cannot be done by hand and will be performed on a computer in next subsection.

 \begin{remark}
The modified Jacobi numbers reduce the dimensions of symbolic support matrices from $13\times12$ to $10\times 9$ in searching the algebraic essential system with minimal ranking. Moreover, the modified Jacobi numbers will drop the complexity of the searching algorithm for sparse difference resultant in \citep{gao-2015}.
\end{remark}

\textbf{2. Implementation}

We use the complied package \textbf{SDResultant} and the mixed subdivision algorithm to automatically compute the sparse difference resultant of $\mathbb{P} $. The program can be found in

\noindent https://github.com/cmyuanmmrc/codeforsdr.

{\color{black} Note that, recently, a package to compute the sparse resultant of algebraic polynomial system is given in Macaulay2 in \cite{Staglian}. Thus maybe it is an alternative way to look for the sparse resultant of the algebraic case in our algorithm.}

Firstly, we transform the target difference polynomial system $\mathbb{P} $ to the given form.
Input the difference polynomial system $\mathbb{P} =
\{\mathbb{P}_0,\mathbb{P}_1,\mathbb{P}_2,\mathbb{P}_3,\mathbb{P}_4\}$ in the form
\begin{eqnarray}\label{xemple1}
&&\no \mathbb{P}_0=u{00}(i)+u{01}(i) \,y(1,i+1)^2 \,y(2,i+1)^2 \,y(3,i+1)\\
   &&\no\hspace{1cm}+u{02}(i) \,y(1,i)^2 \,y(2,i) \,y(3,i) \,y(4,i)\,y(4,i+1),\\
&&\no  \mathbb{P}_1=u{10}(i)+u{11}(i) \,y(1,i+1)^2 \,y(2,i+1)^2 \,y(3,i+1)
\\
   &&\hspace{1cm}+u{12}(i)\,y(1,i+1)^2 \,y(2,i+1) \,y(3,i+1) \,y(4,i+1)\,y(4,i+2),\\
&&\no  \mathbb{P}_2=u{20}(i)+u{21}(i) \,y(1,i+2)^2 \,y(2,i+2)^2 \,y(3,i+2)\\
   &&\no\hspace{1cm}+u{22}(i) \,y(1,i+1)^2
   \,y(2,i+1)^2 \,y(3,i+1)+u{23}(i) \,y(1,i)^2\,y(2,i) \,y(3,i) \,y(4,i)\,y(4,i+1),\\
&&\no  \mathbb{P}_3=u{30}(i)+u{31}(i) \,y(1,i+1)\,y(2,i+1)+u{32}(i)\,y(1,i+1)^2 \,y(2,i+1) \,y(3,i+1)
   \,y(4,i+2),\\
&&\no  \mathbb{P}_4=u{40}(i)+u{41}(i) \,y(1,i+1)\,y(3,i+2) \,y(4,i+1)+u{42}(i) \,y(1,i+1)^2 \,y(2,i+2) \,y(4,i).
\end{eqnarray}

{\color{black}Firstly, applying the package \textbf{SDResultant} to system (\ref{exmp-1}) gives a strong essential system
\begin{eqnarray}\label{exm-1o}
&&\no\big\{\,\text{u00}(i)+z(5) z(3)^2 \text{u01}(i)+z(1) \text{u02}(i),\text{u00}(i+1)+z(4)^2 z(6) \text{u01}(i+1)+z(3) \text{u02}(i+1),\\
&&\no\text{u00}(i+2)+z(2) \text{u01}(i+2)+z(4) \text{u02}(i+2),\text{u10}(i)+z(5) z(3)^2 \text{u11}(i)+z(3) \text{u12}(i),\\
&&\no\text{u10}(i+1)+z(6) z(4)^2 \text{u11}(i+1)+z(4) \text{u12}(i+1),\\
&&\no\text{u20}(i)+z(4)^2 z(6) \text{u21}(i)+z(5) z(3)^2 \text{u22}(i)+z(1) \text{u23}(i),\\
&&\no\text{u20}(i+1)+z(2) \text{u21}(i+1)+z(6) z(4)^2 \text{u22}(i+1)+z(3) \text{u23}(i+1)~\big\},
\end{eqnarray}
where $z(i)$ are the same as $z_i$ in system (\ref{exmp-1}). It takes $50.796$ seconds by the order \emph{TimeUsed[]} in Mathematica 10.
Observe that the first three polynomials $\mathbb{P}_0,\mathbb{P}_1$ and $\mathbb{P}_2$  constitute a super-essential system while the last two $\mathbb{P}_3$ and $\mathbb{P}_4$ are redundant. Obviously, system (\ref{exm-1}) can be further simplified to the following form
\begin{eqnarray}\label{exm-1}
&&\no\big\{~{u00}(i)+w(4) {u01}(i)+w(1) {u02}(i),{u00}(i+1)+w(6) {u01}(i+1)+w(3) {u02}(i+1),\\
&&\no {u00}(i+2)+w(2)
   {u01}(i+2)+w(5) {u02}(i+2),{u10}(i)+w(4) {u11}(i)+w(3) {u12}(i),\\
&&\no{u10}(i+1)+w(6) {u11}(i+1)+w(5){u12}(i+1),\\
&&\no{u20}(i)+w(6) {u21}(i)+w(4) {u22}(i)+w(1) {u23}(i),\\
&&{u20}(i+1)+w(2) {u21}(i+1)+w(6){u22}(i+1)+w(3) {u23}(i+1)~\big\},
\end{eqnarray}
where $w(i)=z(i)\,(i=1,2,3,5),w(4)=z(5)z(3)^2,w(6)=z(4)^2z(6) $.

Then we regard system (\ref{exm-1}) as an algebraic polynomial system in $w(i)\,(i=1,\dots,6)$, and then with the mixed subdivision algorithm find the matrix representation of the sparse resultant  $R$ in the form
\begin{eqnarray}
M=\left(
\begin{array}{ccccccc}
 \text{u10}(i+1) & \text{u11}(i+1) & \text{u12}(i+1) & 0 & 0 & 0 & 0 \\
 \text{u20}(i) & \text{u21}(i) & 0 & \text{u22}(i) & 0 & 0 & \text{u23}(i) \\
 \text{u00}(i+2) & 0 & \text{u02}(i+2) & 0 & 0 & \text{u01}(i+2) & 0 \\
 \text{u10}(i) & 0 & 0 & \text{u11}(i) & \text{u12}(i) & 0 & 0 \\
 \text{u00}(i+1) & \text{u01}(i+1) & 0 & 0 & \text{u02}(i+1) & 0 & 0 \\
 \text{u20}(i+1) & \text{u22}(i+1) & 0 & 0 & \text{u23}(i+1) & \text{u21}(i+1) & 0 \\
 \text{u00}(i) & 0 & 0 & \text{u01}(i) & 0 & 0 & \text{u02}(i) \\
\end{array}
\right)
\end{eqnarray}

The time to construct the matrix in Maple 18 is 4.641 seconds.
Then the required sparse difference resultant $R$ of system (\ref{xemple1}) is
  \begin{eqnarray}
  &&\no\hspace{-0.6cm} R=Det[M]\\
&&\no\hspace{-0.3cm}={u01}(2+i) {u02}(i) {u02}(1+i) {u11}(i) {u12}(1+i) {u20}(1+i) {u21}(i)\\
&&\no-{u02}(i) {u02}(1+i) {u02}(2+i)
   {u11}(i) {u11}(1+i) {u20}(i) {u21}(1+i)\\
&&\no+{u02}(i) {u02}(1+i) {u02}(2+i) {u10}(1+i) {u11}(i) {u21}(i)
   {u21}(1+i)\\
&&\no-{u00}(2+i) {u02}(i) {u02}(1+i) {u11}(i) {u12}(1+i) {u21}(i) {u21}(1+i)\\
&&\no+{u01}(1+i)
   {u01}(2+i) {u02}(i) {u12}(i) {u12}(1+i) {u20}(1+i) {u22}(i)\\
&&\no+{u02}(i) {u02}(1+i) {u02}(2+i) {u10}(i)
   {u11}(1+i) {u21}(1+i) {u22}(i)\\
&&\no+{u01}(1+i) {u02}(i) {u02}(2+i) {u10}(1+i) {u12}(i) {u21}(1+i)
   {u22}(i)\\
&&\no-{u00}(1+i) {u02}(i) {u02}(2+i) {u11}(1+i) {u12}(i) {u21}(1+i) {u22}(i)\\
&&\no-{u00}(2+i)
   {u01}(1+i) {u02}(i) {u12}(i) {u12}(1+i) {u21}(1+i) {u22}(i)\\
&&\no-{u01}(2+i) {u02}(i) {u02}(1+i) {u11}(i)
   {u12}(1+i) {u20}(i) {u22}(1+i)\\
&&\no+{u01}(2+i) {u02}(i) {u02}(1+i) {u10}(i) {u12}(1+i) {u22}(i)
   {u22}(1+i)\\
&&\no-{u00}(1+i) {u01}(2+i) {u02}(i) {u12}(i) {u12}(1+i) {u22}(i) {u22}(1+i)\\
&&\no-{u01}(i)
   {u01}(1+i) {u01}(2+i) {u12}(i) {u12}(1+i) {u20}(1+i) {u23}(i)\\
&&\no-{u01}(i) {u02}(1+i) {u02}(2+i)
   {u10}(i) {u11}(1+i) {u21}(1+i) {u23}(i)\\
&&\no+{u00}(i) {u02}(1+i) {u02}(2+i) {u11}(i) {u11}(1+i)
   {u21}(1+i) {u23}(i)\\
&&\no-{u01}(i) {u01}(1+i) {u02}(2+i) {u10}(1+i) {u12}(i) {u21}(1+i)
   {u23}(i)\\
&&\no+{u00}(1+i) {u01}(i) {u02}(2+i) {u11}(1+i) {u12}(i) {u21}(1+i) {u23}(i)
\\
&&\no+{u00}(2+i) {u01}(i)
   {u01}(1+i) {u12}(i) {u12}(1+i) {u21}(1+i) {u23}(i)\\
&&\no-{u01}(i) {u01}(2+i) {u02}(1+i) {u10}(i)
   {u12}(1+i) {u22}(1+i) {u23}(i)\\
&&\no+{u00}(i) {u01}(2+i) {u02}(1+i) {u11}(i) {u12}(1+i) {u22}(1+i)
   {u23}(i) \\
&&\no+{u00}(1+i) {u01}(i) {u01}(2+i) {u12}(i) {u12}(1+i) {u22}(1+i) {u23}(i)\\
&&\no+{u01}(1+i)
   {u01}(2+i) {u02}(i) {u11}(i) {u12}(1+i) {u20}(i) {u23}(1+i)\\
&&\no-{u00}(1+i) {u01}(2+i) {u02}(i) {u11}(i)
   {u12}(1+i) {u21}(i) {u23}(1+i)\\
&&\no-{u01}(1+i) {u01}(2+i) {u02}(i) {u10}(i) {u12}(1+i) {u22}(i)
   {u23}(1+i)\\
&&\no+{u01}(i) {u01}(1+i) {u01}(2+i) {u10}(i) {u12}(1+i) {u23}(i) {u23}(1+i)\\
&&\no-{u00}(i)
   {u01}(1+i) {u01}(2+i) {u11}(i) {u12}(1+i) {u23}(i) {u23}(1+i).
     \end{eqnarray}

Thus the total time to compute the sparse difference resultant $R$ of system (\ref{xemple1}) is $51.688+4.641=56.329$ seconds. Though the time is not very good, it is the first computable algorithm for the sparse difference resultant while the algorithm in \cite{gao-2015} is not performed on computer.  }    

\subsubsection{Several practical examples}
We show how to solve difference problems with the sparse difference resultant algorithm by considering several practical examples.

\textbf{Example 1. The $n$-th Fibonacci number}\vspace{0.1cm}

{ The second example is also about the $n$-th Fibonacci number $F_n$. We show the sequence $A_n:=F_{2^n}$ satisfy a nonlinear difference equation \citep{ekhad-2014}. Let $B_n=F_{2^n+1}$. Then standard identities of Fibonacci numbers implies two difference equations \citep{ovc-2018}
\begin{eqnarray}\label{fib-l}
&& P_0=A_{n+1}-A_n(2B_n-A_n)=0,~~~ P_1=B_{n+1}-A_n^2-B_n^2=0.
\end{eqnarray}

After rewriting system (\ref{fib-l}) in the Mathematica form $P_0=u(i+1)-u(i)(2y(1,i)-u(i)), P_1=y(1,i+1)-u(i)^2-y(1,i)^2$, then with our \textbf{SDResultant} algorithm, we take 0.922 seconds to obtain the strong essential system
$$\big\{-2 z(1) u(i)+u(i)^2+u(i+1),-2 z(2) u(i+1)+u(i+1)^2+u(i+2),-u(i)^2-z(1)^2+z(2)\big\}. $$
Then eliminate the variables $z(1)$ and $z(2)$ by means of the mixed subdivision algorithm, we take $0.015$ seconds and obtain a $4\times4$ matrix whose determinant is $-5 u(i+1) u(i)^4+2 u(i+2) u(i)^2-u(i+1)^3=0$ which is the required nonlinear difference equation in $A_n$ by recovering the original variables.\vspace{0.1cm}

\textbf{Example 2. The stage structured Leslie-Gower model}\vspace{0.1cm}

The third example is to consider the stage structured Leslie-Gower model \citep{hensen-2007}
\begin{eqnarray}\label{les-l}
&&\no P_0=(1+d_1A_n)J_{n+1}-b_1A_n,\\
&&\no P_1=(1+J_n+c_1j_n)A_{n+1}-e_1J_n, \\
&&\no P_2=(1+d_2a_n)j_{n+1}-b_2a_n,\\
&&P_3=(1+c_2J_n+j_n)a_{n+1}-e_2j_n.
\end{eqnarray}

We want to eliminate $A_n,J_n$ and $j_n$ to find the relation of $a_n$. Again rewriting system (\ref{les-l}) in the Mathematica form, and by means of the package \textbf{SDResultant}, we take 40.499 seconds to find the strong essential system
\begin{eqnarray}\label{lesg}
&&\no \big\{-b_1 z(1)+d_1 z(3) z(1)+z(3),c_1 z(4) z(1)-e_1 z(2)+z(2) z(1)+z(1),\\
   &&\no\hspace{0.3cm} z(5) \left(d_2 u(i+2)+1\right)-b_2 u(i+2),c_2 z(2) u(i+2)+z(4)
   \left(u(i+2)-e_2\right)+u(i+2),\\
   &&\hspace{0.3cm}z(4) \left(d_2
   u(i)+1\right)-b_2 u(i),c_2 z(3) u(i+4)+z(5) \left(u(i+4)-e_2\right)+u(i+4)\big\},
\end{eqnarray}
where $u[i]=a_n$ and $z(i)\,(i=1,\dots,5)$ are temporary variables.

Then one can use mixed subdivision method to get the sparse resultant of system (\ref{lesg}),
 we spend 0.954 seconds to obtain a $8\times8$ matrix $M_1$ and a $2\times2$ matrix $M_2$. Then the quotient of $|M_1|/|M_2|$ is the condition for $a_n$
\begin{eqnarray}
&&\no a_{n+4} \left(d_2 a_{n+2}+1\right) \left(\Delta-b_1 c_2 e_1 a_{n+2} \left(d_2
   a_{n}+1\right)\right)\\
   &&\no +b_2 \Big[a_{n+2} a_{n+4} \left(\Delta-b_1 c_2 e_1 a_{n} \left(d_2
   a_{n+2}+1\right)\right)+e_2 a_{n+2} \left(b_1 c_2 d_2 e_1 a_{n}
   a_{n+4}-\Delta\right)+e_2 b_1 c_2 e_1 a_{n} a_{n+4}\Big],
\end{eqnarray}
where
$\Delta=b_2 e_2 \left(d_1 e_1+1\right) a_n-a_{n+2} \left(d_1 e_1+1\right)
   \left(\left(b_2+d_2\right) a_n+1\right)-c_2 a_{n+2}\left(a_n \left(b_2
   c_1+d_2\right)+1\right)$.\vspace{0.1cm}

\textbf{Example 3. The May-Leonard model for 2-plant annual competition}\vspace{0.1cm}

The fourth example is to consider the May-Leonard model for 2-plant annual competition.  We first verify whether $y_n$ can be eliminated from the May-Leonard model for 2-plant annual competition which was considered in \citep{roeger-2004,ovc-2018},
\begin{eqnarray}\label{may-2}
&&\no P_0=(x_{n+1}-bx_n)(x_n+\alpha_1 y_n)+(b-1)x_n,\\
&& P_1=(y_{n+1}-by_n)(\alpha_2 x_n+ y_n)+(b-1)y_n,
\end{eqnarray}
where $b,\,\alpha_i,\,(i=1,2)$ are parameters.

We regard $x_n$ as a parameter variable and $y_n$ as unknown variable, then
the Jacobi numbers are $J_0=1, J_1=0$.
Then difference $P_0$ by once, and together with $P_1$
we obtain three difference polynomials. By the package \textbf{SDResultant}, input system (\ref{may-2}) in Mathematica form
\begin{eqnarray}
&&\no P_0=(u(i+1)-b u(i)) \left(\alpha _1 y(1,i)+u(i)\right)+(b-1) u(i),\\
&&\no P_1=(y(1,i+1)-b y(1,i)) \left(+y(1,i)+\alpha_2 u(i)\right)+(b-1) y(1,i),
\end{eqnarray}
where $x_n=u(i), y_n=y(1,i)$. Then we spend 2.418 seconds to find the strong essential system
\begin{eqnarray}\label{mayr-2d}
&&\no\big\{z(1) \left(\alpha _1 u(i+1)-\alpha _1 b u(i)\right)+u(i) (-bu(i)+b+u(i+1)-1),\\
&&\no \hspace{0.2cm}z(2)\left(\alpha _1 u(i+2)-\alpha _1 b u(i+1)\right)+u(i+1) ( -bu(i+1)+b+u(i+2)-1),\\
&& \hspace{0.2cm}z(1)\left(-\alpha _2 b u(i)+b-1\right)-b z(1)^2+\alpha _2 z(2) u(i)+z(2) z(1)\big\}.
\end{eqnarray}

Then with the mixed subdivision method to eliminate the variables $z(1)$ and $z(2)$ from system (\ref{mayr-2d}), we get a $4\times4$ square matrix whose determinant is the the difference polynomial after recovering $u[i]$ to $x_n$
\begin{eqnarray}
&&\no\left(\alpha _1 \alpha _2-1\right) b^3 x_n^3 \left(b x_{n+1}-x_{n+2}\right)+b^2 x_n^2 \Big[x_{n+1} \left((b-1) \left(-\alpha _1
   \left(\alpha _2 (b-1)+b\right)+2 b-1\right)\right)
   x_{n+2}\\
   &&\no +3x_{n+1}
   \left(\alpha _1 \alpha _2-1\right) (1-b x_{n+1})+\left(\alpha _1 \left(\alpha _2+1\right)-2\right) (b-1) x_{n+2}\Big]+\alpha _1 \alpha _2 x_{n+1}^3
  (x_{n+2}-b x_{n+1}+b-1)\\
   &&\no +b x_n
   \Big[3 \left(\alpha _1 \alpha _2-1\right)x_{n+1}^3(bx_{n+1}-x_{n+2})+x_{n+1}^2 \left((b-1) \left(\alpha _1
\alpha _2 (b-2)+2 b\alpha _1-3 b+2\right)\right)\\
   &&\no+(b-1) x_{n+1} \left((b-1) \left(\alpha _1 b-b+1\right)-\left(\alpha _1
   \left(\alpha _2+2\right)-3\right) x_{n+2}\right)-\left(\alpha _1-1\right) (b-1)^2
   x_{n+2}\Big]\\
   &&\no +x_{n+1} \left(b+x_{n+1}-1\right) \left(-x_{n+1} \left(\alpha _1 (b-1)
   b+b+x_{n+2}-1\right)+\alpha _1 (b-1) x_{n+2}+b x_{n+1}^2\right).
\end{eqnarray}

\textbf{Example 4. The May-Leonard model for 3-plant annual competition}\vspace{0.1cm}

Finally, we consider the May-Leonard model for 3-plant annual competition  and verify whether $y_n$ and $z_n$ can be eliminated from it which was considered in \citep{roeger-2004,ovc-2018},}
\begin{eqnarray}\label{may-l}
&&\no P_0=(x_{n+1}-bx_n)(x_n+\alpha_1 y_n+\beta_1 z_n)+(b-1)x_n,\\
&&\no P_1=(y_{n+1}-by_n)(\alpha_2 x_n+ y_n+\beta_2 z_n)+(b-1)y_n, \\
&& P_2=(z_{n+1}-bz_n)(\alpha_3 x_n+\beta_3 y_n+ z_n)+(b-1)z_n,
\end{eqnarray}
where $b,\,\alpha_i,\,\beta_i\,(i=1,2,3)$ are parameters.
 By means of \textbf{SDResultant} algorithm, we take $x_n$ as a parameter variable and $y_n,z_n$ as two unknown variables, then
the Jacobi numbers are $J_0= 2, J_1=J_2=1$.
Then difference $P_0$ by twice, $P_1$ and $P_2$ by once respectively, and
we obtain seven difference polynomials. Again by the package \textbf{SDResultant}, input system (\ref{may-l}) in Mathematica form
\begin{eqnarray}
&&\no P_0=(u(i+1)-b u(i)) \left(\alpha _1 y(1,i)+\beta _1 y(2,i)+u(i)\right)+(b-1) u(i),\\
&&\no P_1=(y(1,i+1)-b y(1,i)) \left(\beta _2 y(2,i)+y(1,i)+\alpha
   _2 u(i)\right)+(b-1) y(1,i),\\
&&\no P_2=(y(2,i+1)-b y(2,i)) \left(\beta _3 y(1,i)+y(2,i)+\alpha _3 u(i)\right)+(b-1) y(2,i),
\end{eqnarray}
where $x_n=u(i), y_n=y(1,i),z_n=y(2,i)$, it takes 12.032 seconds to obtain the strong essential system
\begin{eqnarray}\label{diff-m}
&&\no\hspace{-0.2cm}\big\{(u(i+1)-b u(i)) \left(u(i)+\alpha _1 z(1)+\beta _1 z(4)\right)+(b-1) u(i),\\
&&\no(u(i+2)-b u(i+1)) \left(u(i+1)+\alpha _1 z(2)+\beta _1
   z(5)\right)+(b-1) u(i+1),\\
&&\no(u(i+3)-b u(i+2)) \left(u(i+2)+\alpha _1 z(3)+\beta _1 z(6)\right)+(b-1) u(i+2),\\
&&\no(z(2)-b z(1)) \left(\alpha _2
   u(i)+\beta _2 z(4)+z(1)\right)+(b-1) z(1),\\
&&\no(z(3)-b z(2)) \left(\alpha _2 u(i+1)+\beta _2 z(5)+z(2)\right)+(b-1) z(2),\\
&&\no(z(5)-b z(4))
   \left(\alpha _3 u(i)+\beta _3 z(1)+z(4)\right)+(b-1) z(4),\\
&&(z(6)-b z(5)) \left(\alpha _3 u(i+1)+\beta _3 z(2)+z(5)\right)+(b-1)
   z(5)\big\}.
   \end{eqnarray}

Then regarding (\ref{diff-m}) as an algebraic polynomial system about $z(i)\, (i=1,\dots,6)$, with the mixed subdivision method we spend $16.812$ seconds to get the sparse resultant of system (\ref{diff-m}) which is the sparse difference resultant of system (\ref{may-l}). The sparse difference resultant is the quotient of two determinants, the determinant of a $163\times163$ matrix divided by the one of a $122\times122$ matrix, of total degree at most $41$ in the parameter variables. {\color{black}In general, the computation of determinant of large matrixes including parameters is not an easy work while for some specializations of the parameters  one can resort to division method and the greatest common divisor technique \citep{canny-2000}.} Note that the sparse difference resultant is not equal to zero identically since it is not zero for the particular values $b=2,\alpha_3=\beta_1=0$ and the other parameters $\alpha _1=\alpha _2=\beta _2=\beta _3=1$, but it is still very large.

In Table 1, the runtimes of computing the sparse difference resultants for the above examples are collected in the first three columns while the runtimes of the algorithm in \cite{ovc-2018} are listed in the fourth column.
\begin{table}[htb]
 \centering
\caption{Comparisons of the runtimes and the sizes of matrix}
\footnotesize\renewcommand{\arraystretch}{1.2}
\begin{tabular}{ccccccc}\hline
\multicolumn{1}{c}{Examples}&\multicolumn{1}{c}{Difference(s)}&\multicolumn{1}{c}{Algebraic(s)}&\multicolumn{1}{c}{Total(s)}&\multicolumn{1}{c}{\cite{ovc-2018}}
&\multicolumn{1}{c}{Our sizes}&\multicolumn{1}{c}{\cite{gao-2015}}\\\hline
Artificial& 94.614& 4.641 &99.256&\#&$10\times9$&$13\times12$\\
1&0.922&0.015&0.938& 0.06&$3\times2$&$3\times2$\\
2&40.499&0.954&41.456& 18.71 &$13\times12$&$13\times12$\\
3& 2.418&0.063&2.483&0.6 &$3\times2$&$3\times2$\\
4&12.032 &16.812&$>1000$& $>1000$&$7\times6$&$7\times6$ \\
   \hline
 \end{tabular}\\\vspace{0.1cm}
  \end{table}
In particular, the column ``Difference" represent the running time implemented in Mathematica for finding the strong essential polynomial system while the column ``Algebraic" are the ones using the mixed subdivision method in \cite{emiris} to get the matrix representation of sparse difference resultant. {\color{black}The column ``Total" denote the sums of the running time in the two columns ``Difference" and ``Algebraic" and the running time of computing the determinants found by the function ``AbsoluteTiming()" in Mathematica. Note that the running time for computing the determinant of lower-order matrixes are very short, but for Example~4, we can not get the explicit expression from the matrix representation within 1000 seconds.} The last two columns are the comparisons of the matrix sizes of our algorithm with the one in \cite{gao-2015}. The matrixes for the comparison are the symbolic support matrixes, which respectively correspond to the difference polynomial system by difference the new super-essential system in step 4 with the new Jacobi number and the ones by difference the super-essential system in step 3 with the original Jacobi number.

Table 1 shows that for the simple examples, the runtimes by the algorithm in \cite{ovc-2018} is shorter than ours, but for some complex example, such as the artificial example, our algorithm can give the matrix representation of the sparse difference resultant and find the corresponding eliminated polynomial, {\color{black} while with the package in \cite{ovc-2018} we encounter an error. In particular, for example 4, both our algorithm and the one in \cite{ovc-2018} runs the times larger than 1000 seconds, where the difficulty for the former is the computation of determinant of large matrixes while the latter calls Gr\"{o}bner basis algorithm, but our algorithm can represent the eliminated polynomial as the quotient of determinants of two matrixes and give the explicit result for some specializations of the parameters}.
\section{Conclusion}
Sparse difference resultant for a Laurent transformally essential system is further studied and new order bounds are obtained.
We use the difference specialization technique to simplify the computation of the sparse difference resultant.
Based on these results, we give an efficient algorithm to compute sparse difference resultants.
We analyze the complexity of the algorithm and illustrate the efficiency by an artificial and several practical examples. To the best of our knowledge, our algorithm and its implementation are the first automatic algorithm to compute the sparse difference resultant and to give the matrix representations. The polynomial time complexity makes our algorithm more efficient for some complicated examples.
\section*{Acknowledgments}
We sincerely thank for the valuable suggestions of anonymous referees and editors which greatly helped us to improve this work.\\

%
%
%
%
%

\bibliographystyle{elsarticle-harv}

\begin{thebibliography}{30}
\bibitem[{Cox (2004)}]{cox-2004} D.A. Cox, J. Little, D. O'Shea, Using algebraic Geometry. Springer-Verlag,  New  York, 2004.

\bibitem[{Gelfand (1994)}]{gelfand}
 I.~M. Gelfand,  M. Kapranov, A. Zelevinsky, {\em Discriminants, Resultants and Multidimensional Determinants}.
 Boston, Birkh\"auser, 1994.

\bibitem[{Sturmfels (1993)}]{sturmfels} B. Sturmfels, Sparse Elimination Theory, In {\em Computational Algebraic Geometry and Commutative Algebra}, Eisenbud, D., Robbiano, L. eds. 264-298, Cambridge University Press, 1993.

 \bibitem[{Canny (1995)}]{emiris} J.F.  Canny,  I.Z. Emiris, Efficient Incremental Algorithms for the Sparse Resultant and the
Mixed Volume. Journal of Symbolic Computation 20(2) (1995) 117-149.

{\color{black}\bibitem[{Emiris (2012a)}]{emiris-2012a} I.Z. Emiris, A. Mantzaflaris, Multihomogeneous resultant formulae for systems with scaled support, Journal of Symbolic Computation 47 (7) (2012) 820-842.

\bibitem[{Emiris (2012b)}]{emiris-2012b} I.Z. Emiris, A General Solver Based on Sparse Resultants, arXiv:1201.5810 (2012).}

\bibitem[{D'Andrea (2011)}]{an-2011}C. D'Andrea, Macaulay style formulas for sparse resultants. Transactions of the American Mathematical Society 354(7) (2002) 2595-2629.

\bibitem[{Yang (2011)}]{yang-2011}  L. Yang, Z. Zeng, W. Zhang, Differential elimination with
Dixon resultants. Applied Mathematics and Computation 218 (2011) 10679-10690.

\bibitem[{Rueda (2010)}]{rueda-2010}S. L. Rueda, J. R. Sendra, Linear complete differential resultants and the implicitization of
linear DPPEs. Journal of Symbolic Computation 45 (2010) 324-341.

\bibitem[{Li (2015a)}]{gao-2015-a} W. Li, C.M. Yuan, X.S. Gao, Sparse Differential Resultant for Laurent Differential
Polynomials. Foundation of  Computational Mathematics 15 (2015) 451-517.

\bibitem[{Li (2015b)}]{gao-2015}W. Li, C.M. Yuan, X.S. Gao, Sparse difference resultant.
Journal of Symbolic Computation 68 (2015) 169-203.

\bibitem[{Cohn (1965)}]{cohn} R.M. Cohn, {\em Difference Algebra}. Interscience Publishers.  New York,  1965.


\bibitem[{Hrushovski (2007)}]{hru-2007} E. Hrushovski, F. Point, On von Neumann regular rings with an automorphism. Journal of
Algebra 315(1) (2007) 76-120.

\bibitem[{Ovchinnikov (2020)}]{ovc-2018}A. Ovchinnikov, G. Pogudin, T. Scanlon, Effective difference elimination and Nullstellensatz.
Journal of the European Mathematical Society 22(8) (2020) 2419-2452.

\bibitem[{Gao (2009)}]{gao-2009} X.S. Gao, Y. Luo, C.M. Yuan, A characteristic set method for ordinary difference polynomial systems. Journal of Symbolic Computation 44 (2009) 242-260.

\bibitem[{Moenck (1973)}]{moenck-1973}R.T. Moenck, {\em Fast computation of GCDs}. Proc. STOC'73, ACM Press, (1973) 142-151.

 \bibitem[{Ollivier (2010)}]{jacobi-number} F. Ollivier,  Jacobi's bound and normal forms computations. A historical survey,  arXiv:0911.2674v2 (2010).

 \bibitem[{Sturmfels (1994)}]{sturmfels2} B. Sturmfels, On The Newton Polytope of the Resultant. Journal of Algebraic Combinatorics 3 (1994) 207-236.


\bibitem[{Canny (2000)}]{canny-2000}J.F. Canny, I.Z. Emiris, A Subdivision-Based Algorithm for the Sparse Resultant. Journal of the ACM 47(3)(2000) 417-451.

{
\bibitem[{Zippel (1979)}]{zippel} R. Zippel, Probabilistic algorithms for sparse polynomials. Lecture Notes in Computer Science 72 (1979) 216-226.

\bibitem[Storjohann (2000)]{Storjohann2000} A.~Storjohann,
 \textit{Algorithms for matrix canonical forms[D]},
 PhD Thesis, Swiss Federal Institute of Technology
  2000.

\bibitem[{Henson (2007)}]{hensen-2007} J. Cushing, S. Henson, L. Roeger, Coexistence of competing juvenile-adult structured populations. Journal of Biological Dynamics 1(2)(2007) 201-231 .

\bibitem[{Roeger (2004)}]{roeger-2004} L. Roeger, L. Allen, Discrete May-Leonard competition models I. Journal of Difference Equations and Applications 10(1) (2004) 77-98.

\bibitem[{Ekhad (2014)}]{ekhad-2014}S.B.  Ekhad, D. Zeilberger, How to generate as many Somos-like miracles as you
wish. Journal of Difference Equations and Applications 20 (2014) 852-858.}


{\bibitem[{Recio (2018)}]{recio-2018}
 T. Recio, J.R. Sendra,  C.Villarino, The importance of being zero. Proc. ISSAC'2018 (2018) 327-333.}

{\color{black}\bibitem[{Staglian (2020)}]{Staglian} G. Staglian$\grave{\text{ o}}$, A package for computations with sparse resultants, 2020. URL https://arxiv.
org/abs/2010.00286.}
\end{thebibliography}

\end{document}